\newcommand{\xvec}{\boldsymbol}
\newcommand{\xmat}{\mathbf}
\newcommand{\xset}{\mathds}
\newtheorem{theorem}{Theorem}
\newtheorem{example}{Example}
\newtheorem{corollary}{Corollary}
\begin{document}

\def\spacingset#1{\renewcommand{\baselinestretch}%
{#1}\small\normalsize} \spacingset{1}


\title{\bf Spatial and Spatiotemporal GARCH Models - A Unified Approach}
  \author{Philipp Otto\\
    \small{Leibniz University Hannover, Germany}\\
    Wolfgang Schmid\\
    \small{European University Viadrina, Frankfurt (Oder), Germany}}
  \maketitle

\begin{abstract}
In time-series analyses, particularly for finance, generalized autoregressive conditional heteroscedasticity (GARCH) models are widely applied statistical tools for modelling volatility clusters (i.e., periods of increased or decreased risk). In contrast, it has not been considered to be of critical importance until now to model spatial dependence in the conditional second moments. Only a few models have been proposed for modelling local clusters of increased risks. In this paper, we introduce novel spatial GARCH and exponential GARCH processes in a unified spatial and spatiotemporal GARCH-type model, which also covers all previously proposed spatial ARCH models as well as time-series GARCH models. For this common modelling framework, estimators are derived based on nonlinear least squares and on the maximum-likelihood approach. In addition to the theoretical contributions of this paper, we suggest a model selection strategy that is verified by a series of Monte Carlo simulation studies. Eventually, the use of the unified model is demonstrated by an empirical example that focuses on real estate prices from 1995 to 2014 across the ZIP-Code areas of Berlin. A spatial autoregressive model is applied to the data to illustrate how locally varying model uncertainties can be captured by the spatial GARCH-type models.
\end{abstract}

\noindent%
{\it Keywords:}Exponential GARCH, spatial GARCH, spatiotemporal statistics, unified approach, variance clusters, real estate prices.
\vfill

\spacingset{1.45} 

\section{Introduction}\label{sec:introduction}

Recent literature have dealt with the extension of generalized autoregressive conditional heteroscedasticity (GARCH) models to spatial and spatiotemporal processes (e.g. \citealt{Otto16_arxiv,Otto18_spARCH,Otto19_statpapers,Sato17,Sato18b,Sato18a}). Whereas the classical ARCH model is defined as a process over time, these random processes have a multidimensional support. Thus, they allow for spatially dependent second-order moments, while the local means are uncorrelated and constant in space (see \citealt{Otto19_statpapers}). \cite{Sato17,Sato18a} introduced a random process incorporating elements of GARCH and exponential GARCH (E-GARCH) processes, which is, however, neither a GARCH nor an E-GARCH process. Moreover, \citealt{Otto16_arxiv} only focused on spatial ARCH processes without considering the influences from the realized, conditional variance at neighboring locations. Direct extensions of GARCH and E-GARCH processes to spatial settings do not exist among current research.

In the paper, we have several major contributions. First, we introduce completely novel spatial GARCH and E-GARCH processes. Along with the E-GARCH approach, a logarithmic spatial GARCH approach is discussed as special case. Secondly, all of these new spatial GARCH models are combined in a unified framework along with previously proposed models and time-series GARCH models. This unified spatial GARCH process is a completely new class of models in spatial econometrics, which allow for a common estimation strategies, unified model comparisons, and model diagnostics. In addition, all models are computationally implemented in one library, the R-package \texttt{spGARCH}. Due to the flexible structure of this approach, other GARCH-type models, like threshold or multivariate GARCH models, can easily be constructed.

From a practical perspective, this unified spatial GARCH model can be used to model spill-over effects in the conditional variances. That means that an increasing variance in a certain region of the considered space would lead to an increase or decrease in the adjacent regions, depending on the direction (sign) of the spatial dependence. Local climate risks, such as fluctuations in the temperature and precipitation, or financial risks in spatially constrained markets, such as the real estate or labor, could be modeled using this approach. Furthermore, spatial GARCH-type models can be used as error models for any linear or nonlinear spatial regression model to account for local model uncertainties (i.e., areas in which the considered models perform worse than in others). Such model uncertainties can be considered to be a kind of local risk.

The remainder of this paper is structured as follows. In the next section, we introduce the unified approach and discuss several examples nested within this approach, including novel spatial GARCH and E-GARCH processes. Following from there, different estimation procedures are briefly described. These theoretical sections are followed up with a discussion of the insights gained from simulation studies. The paper then supplies a real-world example, namely the real estate prices in the German capital city of Berlin. In Section \ref{sec:conclusion}, we stress some important extensions for future research before concluding the paper.

\section{Spatial and Spatiotemporal GARCH-Type Models}\label{sec:models}

Let $\left\{Y(\xvec{s}) \in \xset{R}: \xvec{s} \in D_{\xvec{s}} \right\}$ be a univariate stochastic process, where $D_{\xvec{s}}$ represents a set of possible locations in a $q$-dimensional space. Thus, spatial and spatiotemporal models are both covered by this approach. With regards to spatiotemporal processes, the temporal dimension can be easily considered as one of the $q$ dimensions. In addition, time-series GARCH models are included for $q = 1$.

Let $\xvec{s}_1, \ldots, \xvec{s}_n$ denote all locations, and let $\xvec{Y}$ stand for the vector of observations \linebreak $\left(Y\left(\xvec{s}_i\right)\right)_{i = 1, \ldots, n}$. The commonly applied spatial autoregressive (SAR) model implies that the conditional variance $Var(Y(\xvec{s}_i) | Y(\xvec{s}_j), j \neq i)$ is constant (cf. \citealt{Cressie93,Cressie11}) and does not depend on the observations of neighboring locations. This approach is extended by assuming the changes in the volatility can spill over to neighboring regions and that conditional variances can vary over space, resulting in clusters of high and low variance.
As in time-series ARCH models developed by \cite{Engle82}, the vector of observations is given by the nonlinear relationship
\begin{equation}
\xvec{Y} = \text{diag}(\xvec{h})^{1/2} \xvec{\varepsilon} \, \label{eq:initial}
\end{equation}
where $\xvec{h} = (h(\xvec{s}_1), \ldots, h(\xvec{s}_n))'$ and $\xvec{\varepsilon} = (\varepsilon(\xvec{s}_1), \ldots, \varepsilon(\xvec{s}_n))'$ is a noise component, which is later specified in more detail.

\subsection{A Unified Approach}

For the unified approach, we assume that a known function $f$ exists, which relates $\xvec{h}$ to a vector $\xvec{F} = (f(h(\xvec{s}_1)), \ldots, f(h(\xvec{s}_n)))^\prime$. By choosing $f$ and a suitable model of $\xvec{F}$, different spatial GARCH-type processes can be defined. For instance, they could have additive or multiplicative dynamics, or the spill-over effects in the conditional variances could be global or locally constrained to direct neighboring observations. Over the following two sections of the paper, we suggest two alternative representations of the model of $\xvec{F}$. These are generally equivalent. We discuss the link between both approaches in more detail in Section \ref{sec:g_and_gamma}.

After introducing the spatial heteroscedasticity models $\xvec{F}$, we discuss examples of spatial GARCH-type models nested in the approach. Some of these models have already been discussed in recent literature (e.g., \citealt{Otto18_spARCH,Sato18b}; see also Examples \ref{example:spARCH}, \ref{example:spHARCH}). However, we also introduce completely novel spatial models, namely the spatial GARCH (analogous to \citealt{Bollerslev86}; see also Example \ref{example:spGARCH}), the exponential spatial GARCH (analogous to \citealt{Nelson91}; see also Example \ref{example:spEGARCH}), and the spatial log-GARCH model (analogous to \citealt{Pantula86,Geweke86,Milhoj87}; see also Example \ref{example:spLGARCH}).

\subsubsection{Spatial and Spatiotemporal GARCH Models of Type I}

Here we introduce a general approach covering some important spatial and spatiotemporal GARCH-type models, namely the spatial ARCH model of \cite{Otto16_arxiv,Otto18_spARCH} and the hybrid model of \cite{Sato18b}. We refer to this class of spatial and spatiotemporal GARCH models as Type I. The vector $\xvec{F}$ is chosen as
\begin{equation}\label{eq:unified}
\xvec{F} = \xvec{\alpha} + \xmat{W}_1 \xvec{\gamma}(\xvec{Y}^{(2)}) + \xmat{W}_2 \xvec{F}
\end{equation}
with a measurable function $\xvec{\gamma}(\xvec{x}) = (\gamma_1(\xvec{x}), \ldots, \gamma_n(\xvec{x}))^\prime$ and \linebreak
$\xvec{Y}^{(2)} = (Y(\xvec{s}_1)^2, \ldots , Y(\xvec{s}_n)^2)^\prime$. The weighting matrices $\xmat{W}_1 = (w_{1,ij})_{i,j = 1, \ldots, n}$ and \linebreak $\xmat{W}_2 = (w_{2,ij})_{i,j = 1, \ldots, n}$ are assumed to be non-negative with zeros on the diagonal (i.e., $w_{v,ij} \ge 0$ and $w_{v,ii}=0$ for all $i,j = 1, \ldots, n$ and $v = 1, 2$). Moreover, let $\xvec{\alpha} = (\alpha_i)_{i = 1, \ldots, n}$ be a positive vector.

First, we discuss under what conditions the process is well-defined. In the following we make use of the notation $\xmat{E} = \mbox{diag}(\varepsilon(\xvec{s}_1)^2, \ldots, \varepsilon(\xvec{s}_n)^2)$. The proofs of all results are provided in the Appendix.

\begin{theorem}\label{solution}
Let $\xvec{z} = ( z_i )_{i=1,...,n}$ and suppose that for $z_i \ge 0$, $i=1,...,n$
\begin{equation}\label{eq:T}
\xvec{T}(\xvec{z}) = \xvec{\alpha} + \xmat{W}_1 \xvec{\gamma}(\xmat{E} \xvec{z}) - (\xmat{I} - \xmat{W}_2) ( f(z_i) )_{i=1,...,n} + \xvec{z}
\end{equation}
is a contraction mapping, i.e. that there exists $L \in [0,1)$ with $|| \xvec{T}(\xvec{z}_1) - \xvec{T}(\xvec{z}_2)|| \le L ||\xvec{z}_1 - \xvec{z}_2||$, then the equations (1) and (2) have exactly one real-valued solution $\xvec{Y}$.
\end{theorem}

Note that the condition (\ref{eq:T}) is fulfilled if, for example, $\xvec{\gamma}$ satisfies a Lipschitz condition with constant $L_1$, $(f(z_i))_{i=1,...,n}$ satisfies a Lipschitz condition with a constant $L_2$ and $L := L_1 ||\xmat{W}_1 \xmat{E}|| + L_2 ||\xmat{I} - \xmat{W}_2|| < 1$. In addition, the fixed-point theorem of Banach implies that the sequence $\xvec{z}_m = \xvec{T}(\xvec{z}_{m-1})$, $m \ge 1$ converges to $\xvec{h}$ for given $\xvec{\alpha}$, $\xmat{E}$, $\xmat{W}_1$, and $\xmat{W}_2$. Consequently, this result resents a possibility to simulate such a process.
Below, we discuss some important properties of this process including the following condition for stationarity.

\begin{corollary}\label{cor:stationarity2}
Suppose that the assumptions of Theorem \ref{solution} are fulfilled. If $(\varepsilon(\xvec{s}_1),\ldots,\varepsilon(\xvec{s}_n))^\prime$ is strictly stationary, then $(Y(\xvec{s}_1), \ldots, Y(\xvec{s}_n))^\prime$ is strictly stationary as well.
\end{corollary}

Moreover, the observations $Y(\xvec{s})$ are uncorrelated with a mean of zero, as we will show in the following theorem. Thus, spatial GARCH models are suitable error models for use with other linear or nonlinear spatial regression models, such as spatial autoregressive models, without affecting the mean equation. In this way, locally varying model uncertainties can be captured.

\begin{theorem}\label{th:moments1}
Let $i \in \{1,\ldots,n\}$. Suppose that the assumptions of Theorem \ref{solution} are satisfied and that $\xvec{\varepsilon}$ is sign-symmetric, i.e.,
\[ \xvec{\varepsilon} \stackrel{d}{=} ((-1)^{v_1} \varepsilon(\xvec{s}_1),\ldots,(-1)^{v_n} \varepsilon(\xvec{s}_n)) \qquad \mbox{for all} \qquad v_1,\ldots,v_n \in \{0,1\} . \]
\begin{itemize}
\item[a)] Then $Y(\xvec{s}_i)$ is a symmetric random variable. All odd moments and all conditional odd moments of $Y(\xvec{s}_i)$ are zero, provided that they exist.
\item[b)] It holds that $\mbox{Cov}(Y(\xvec{s}_i), Y(\xvec{s}_j)) = 0$ for $i \neq j$ if the second moment exists.
\end{itemize}
\end{theorem}

In the spatial setting, however, the conditional variance $Var(Y(\xvec{s}_i) | Y(\xvec{s}_j), j \neq i)$ is not exactly equal to $h(\xvec{s}_i)$  (see \citealt{Otto19_statpapers}). Nevertheless, the interpretation of $\xvec{h}$ is similar to the conditional variance. In locations $\xvec{s}$, where $h(\xvec{s})$ is large, the conditional variance is also large and vice versa (see \citealt{Otto19_statpapers}, Fig. 1). That means that the local risk or level of uncertainty of this particular region is high compared to its neighbors. Such regions could be identified via $\xvec{h}$; this could be of interest in terms of the valuation of real estate or other immovable assets, since it provides insights ínto an individual location's risk. In addition to this, the spatial GARCH coefficients measure potential risk spillovers from neighboring locations. It is worth noting that in the case of directional spatial processes, $\xvec{h}$ is equal to the conditional variances. Thus, it can be interpreted in the same way as with time-series GARCH models (see \citealt{Otto19_statpapers}).

This Type-I approach allows for a large range of GARCH-type models. Depending on the definition of $f$ and $\xvec{\gamma}$, the resulting spatial GARCH-type models have different stochastic properties. We discuss some important special cases below, starting with the spatial ARCH model (\citealt{Otto16_arxiv,Otto18_spARCH,Otto19_statpapers}), which is a direct extension of the ARCH process of \cite{Engle82} to spatial and spatiotemporal processes. It was originally introduced by \cite{Otto16_arxiv}. For more details on its stochastic properties, we refer to \citealt{Otto19_statpapers}.

\begin{example}[Spatial ARCH process of \citealt{Otto16_arxiv}]\label{example:spARCH}
Choosing $f(x) = x$, $\gamma_i(\xvec{x}) = x_i$ for $i = 1, \ldots, n$, and $\xmat{W}_2 = \xmat{0}$ the spatial ARCH (spARCH) process is obtained. It is given by
\begin{equation*}
  Y(\xvec{s}_i) = \sqrt{h(\xvec{s}_i)} \varepsilon(\xvec{s}_i), \quad i=1,...,n
\end{equation*}
with
\begin{equation*}
\xvec{h} = \xvec{\alpha} + \xmat{W}_1 \xvec{Y}^{(2)}\, .
\end{equation*}
\end{example}

Indeed, the spatial ARCH process can be easily extended to a spatial GARCH process by considering the realized values of $h(\cdot)$ in adjacent locations.

\begin{example}[Spatial GARCH process]\label{example:spGARCH}
Taking $f(x) = x$ and  $\gamma_i(\xvec{x}) = x_i$ for $i=1,...,n$ a spatial  GARCH (spGARCH) process is obtained. That is,
\begin{equation*}
  Y(\xvec{s}_i) = \sqrt{h(\xvec{s}_i)} \varepsilon(\xvec{s}_i), \quad i=1,...,n
\end{equation*}
with
\begin{equation*}
\xvec{h} = \xvec{\alpha} + \xmat{W}_1 \xvec{Y^{(2)}} + \xmat{W}_2 \xvec{h} \, .
\end{equation*}
\end{example}

Since $\xvec{Y}^{(2)} = \xmat{E} \xvec{h}$, the quantity $\xvec{h}$ can be specified as
\begin{equation}\label{epsGARCH}
\xvec{h} = ( \xmat{I} - \xmat{W}_1 \xmat{E} - \xmat{W}_2)^{-1} \xvec{\alpha} \, ,
\end{equation}
if the inverse exists.  For this simple example, there is a unique solution if $||\xmat{I} - \xmat{W}_1 \xmat{E} - \xmat{W}_2|| < 1$, as it is already expressed in Theorem \ref{solution}.

Contrary to this approach, \cite{Sato17,Sato18b,Sato18a} have considered a slightly different choice of $\xvec{h}$, and have used the log-transformation to avoid any non-negativity problems of $\xvec{h}$. Thus, their model combines the GARCH and the E-GARCH attempts. For that reason, we will denote it as the hybrid model. Let $\xvec{h}_L = ( \log(h(\xvec{s}_i)) )_{i=1,...,n}$ and $\xvec{Y}^{(2)}_L = ( \log(Y(\xvec{s}_j)^2 ) )_{i=1,...,n}$.

\begin{example}[Hybrid spatial GARCH process of \citealt{Sato17}]\label{example:spHARCH}
Choosing $f(x) = \log(x)$ and $\gamma_i(\xvec{x}) = \log(x_i)$ the hybrid spatial GARCH (H-spGARCH) process is obtained, i.e.,
\begin{equation*}
  Y(\xvec{s}_i) = \sqrt{h(\xvec{s}_i)} \varepsilon(\xvec{s}_i), \quad i=1,...,n
\end{equation*}
with
\begin{equation*}
\xvec{h}_L = \xvec{\alpha} + \xmat{W}_1  \xvec{Y}^{(2)}_L + \xmat{W}_2 \xvec{h}_L \, .
\end{equation*}
\end{example}

The process has a unique solution if $||\xmat{I} - \xmat{W}_1 - \xmat{W}_2|| < 1$. This is an immediate consequence of Theorem \ref{solution}. It is obtained by setting $\gamma(\xvec{x}) = (\log(x_i) )_{i=1,\ldots,n}$, $f(x)=\log(x)$, and considering the right side of \eqref{eq:T} to be a function of $\log(z)$. We see that the condition on the existence of a solution is much simpler than for the spGARCH process, since it only depends on the weight matrices and not on the random matrix $\xmat{E}$. This simplification is due to the fact that we have an additive decomposition of the function $\xvec{\gamma}$, i.e., $\xvec{\gamma}(\xmat{E} \xvec{z}) = \xvec{\phi}_1(\xmat{E}) + \xvec{\phi}_2(\xvec{z})$ with certain functions $\xvec{\phi}_1$ and $\xvec{\phi}_2$. This functional equation is solved by  the logarithm function. However, the behavior of the H-spGARCH is different to that of the spGARCH. Thus, the one or the other could be preferable for empirical applications.

\subsubsection{Spatial and Spatiotemporal GARCH Models of Type II}

In this section, we extend the exponential and logarithmic GARCH models to apply these to spatial and spatiotemporal data (see Examples \ref{example:spEGARCH} and \ref{example:spLGARCH}). Like for the GARCH Type I approach, these models represent special cases of a new more general attempt. We call this class of spatial and spatiotemporal models GARCH models, Type II GARCH models. Here, the function $\xvec{\gamma}(\xvec{Y}^{(2)})$ in \eqref{eq:unified} is replaced by  the quantity $\xvec{g}(\xvec{\varepsilon}) = (g_1(\xvec{\varepsilon}), \ldots, g_n(\xvec{\varepsilon}))^\prime$. This leads to
\begin{equation}\label{eq:unified2}
\xvec{F} = \xvec{\alpha} + \xmat{W}_1 \xvec{g}(\xvec{\varepsilon}) + \xmat{W}_2 \xvec{F} \, .
\end{equation}

The relationship between both approaches will be discussed in detail in the next section. In general, the the Type I and Type II approaches have similar properties; for example, zero odd moments and uncorrelated observations. Below, we derive the results of the Type II approach in the same structure as above.

\begin{theorem}\label{th:existence}
Let $\xvec{F} = (F_i)_{i = 1, \ldots, n}$. Suppose that $rk(\xmat{I} - \xmat{W}_2) = n$ and that $f:(0,\infty) \rightarrow \xset{R}$ has an inverse function, then Equations \eqref{eq:initial} and \eqref{eq:unified} have exactly one real-valued solution $\xvec{Y}$ given by $Y(\xvec{s}_i) = \sqrt{f^{-1}(F_i)} \; \varepsilon(\xvec{s}_i)$ for $i = 1, \ldots, n$.
\end{theorem}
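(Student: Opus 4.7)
The plan is to read Equation \eqref{eq:unified} as a linear system in the unknown $\xvec{X}$, solve it explicitly using the rank assumption on $\xmat{I} - \xmat{W}_2$, then invert $f$ componentwise to recover $\xvec{h}$, and finally substitute into Equation \eqref{eq:initial} to obtain $\xvec{Y}$. Uniqueness will be automatic at each step because every operation involved (matrix inversion, $f^{-1}$, and multiplication by $\text{diag}(\xvec{h})^{1/2}$) is a single-valued map.

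First I would rewrite Equation \eqref{eq:unified} as $(\xmat{I} - \xmat{W}_2)\xvec{X} = \xvec{\alpha} + \xmat{W}_1 \xvec{g}(\xvec{\varepsilon})$. Since $\xvec{g}$ depends only on the noise $\xvec{\varepsilon}$, the right-hand side is a fixed real vector once $\xvec{\varepsilon}$ is fixed, so no fixed-point argument is needed. The condition $rk(\xmat{I}-\xmat{W}_2)=n$ makes this $n\times n$ matrix invertible, and hence
\begin{equation*}
\xvec{X} \;=\; (\xmat{I}-\xmat{W}_2)^{-1}\bigl(\xvec{\alpha} + \xmat{W}_1 \xvec{g}(\xvec{\varepsilon})\bigr)
\end{equation*}
is the unique real-valued vector satisfying \eqref{eq:unified}. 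This gives existence and uniqueness of $\xvec{X}$, and in particular each coordinate $X_i$ is real.

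Next I would use invertibility of $f$. Because $f:(0,\infty)\to\xset{R}$ admits an inverse function, one has $h(\xvec{s}_i) = f^{-1}(X_i) \in (0,\infty)$, uniquely determined by $X_i$ and therefore by $\xvec{\varepsilon}$. Positivity of $f^{-1}(X_i)$ is the key point that makes the square root in \eqref{eq:initial} well defined; it follows immediately from the codomain of $f^{-1}$ being $(0,\infty)$. Substituting componentwise into \eqref{eq:initial} yields $Y(\xvec{s}_i) = \sqrt{h(\xvec{s}_i)}\,\varepsilon(\xvec{s}_i) = \sqrt{f^{-1}(X_i)}\,\varepsilon(\xvec{s}_i)$, which is the claimed formula, and uniqueness of $\xvec{Y}$ is inherited from the uniqueness of $\xvec{X}$ and the single-valuedness of $f^{-1}$ and the square root of a positive number.

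Since the argument is essentially two applications of invertibility — one for the linear system, one for $f$ — there is no real obstacle; the only delicate point to articulate clearly is why $\xvec{g}$ being a function of $\xvec{\varepsilon}$ (rather than of $\xvec{Y}$) removes any implicit or fixed-point character from \eqref{eq:unified} and reduces the problem to elementary linear algebra plus the pointwise inversion of $f$.
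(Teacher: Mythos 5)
Your proposal is correct and coincides with the argument the paper leaves implicit (its proof is only the remark that ``the result follows with straightforward calculations''): solve the linear system $(\xmat{I}-\xmat{W}_2)\xvec{X}=\xvec{\alpha}+\xmat{W}_1\xvec{g}(\xvec{\varepsilon})$ using the rank condition, apply $f^{-1}$ componentwise, and substitute into \eqref{eq:initial}. Your explicit observation that $\xvec{g}$ depending on $\xvec{\varepsilon}$ (not on $\xvec{Y}$) removes any fixed-point character, and that the hypothesis must be read as $f$ being a bijection onto $\xset{R}$ so that $f^{-1}(X_i)\in(0,\infty)$ is defined for every real $X_i$, is exactly the right way to fill in the details.
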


To ensure that an inverse function $f^{-1}$ exists, it is sufficient to assume that $f$ is a continuous and strictly increasing function on $(0, \infty)$. Spatial GARCH processes of this type are strictly stationary, if the errors $\xvec{\varepsilon}$ are strictly stationary.

\begin{corollary}\label{cor:stationarity}
Suppose that the assumptions of Theorem \ref{th:existence} are fulfilled and that $f^{-1}$ is a measurable function. If $(\varepsilon(\xvec{s}_1),\ldots,\varepsilon(\xvec{s}_n))^\prime$ is strictly stationary, then $(Y(\xvec{s}_1), \ldots, Y(\xvec{s}_n))^\prime$ is strictly stationary as well.
\end{corollary}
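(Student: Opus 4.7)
The plan is to express $\xvec{Y}$ as a single Borel-measurable deterministic function of $\xvec{\varepsilon}$ and then invoke the standard fact that any distributional invariance of $\xvec{\varepsilon}$, and in particular strict stationarity, is transferred through a measurable pushforward.

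To produce such a representation, I would use the hypothesis $rk(\xmat{I}-\xmat{W}_2)=n$ to solve \eqref{eq:unified} explicitly:
\begin{equation*}
\xvec{X} \;=\; (\xmat{I}-\xmat{W}_2)^{-1}\bigl(\xvec{\alpha}+\xmat{W}_1\,\xvec{g}(\xvec{\varepsilon})\bigr).
\end{equation*}
Measurability of $\xvec{g}$ is part of the definition of the unified model, linear maps are measurable, and $f^{-1}$ is measurable by hypothesis, so each $h(\xvec{s}_i)=f^{-1}(X_i)$ is a Borel-measurable function of $\xvec{\varepsilon}$. Combining this with the identity $Y(\xvec{s}_i)=\sqrt{f^{-1}(X_i)}\,\varepsilon(\xvec{s}_i)$ from Theorem~\ref{th:existence}, and using the measurability of multiplication and of the square root on $(0,\infty)$, gives $\xvec{Y}=T(\xvec{\varepsilon})$ for a fixed Borel map $T:\xset{R}^n\to\xset{R}^n$. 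Crucially, $T$ depends only on the deterministic model inputs $\xmat{W}_1$, $\xmat{W}_2$, $\xvec{\alpha}$, $f$, and $\xvec{g}$, not on the realisation of the noise.

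Once this representation is in hand, strict stationarity is immediate: for any admissible shift of locations the translated error vector has, by assumption, the same joint distribution as $\xvec{\varepsilon}$, and applying the same $T$ to both sides preserves that equality in distribution, yielding strict stationarity of the translated $\xvec{Y}$. The main thing to double-check, and essentially the only delicate point, is that $T$ really is the same measurable map before and after the shift; this holds precisely because $\xmat{W}_1$, $\xmat{W}_2$, $\xvec{\alpha}$, $f$, and $\xvec{g}$ are deterministic objects built into the model specification. Beyond that, the argument reduces to the routine remark that compositions of measurable maps are measurable.
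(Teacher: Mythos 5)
Your proposal is correct and takes essentially the same route as the paper, which likewise argues in one line that $(Y(\xvec{s}_1),\ldots,Y(\xvec{s}_n))^\prime$ is a measurable function of $(\varepsilon(\xvec{s}_1),\ldots,\varepsilon(\xvec{s}_n))^\prime$ and therefore inherits strict stationarity. You simply spell out the measurability of the map $T$ (via $(\xmat{I}-\xmat{W}_2)^{-1}$, $\xvec{g}$, $f^{-1}$, and the square root) in more detail than the paper does.
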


Before discussing some important special cases, we derive the conditional and unconditional moments of the process $\xvec{Y}$. Let $c_i$ denote the $i$-th component of the vector $(\xmat{I} - \xmat{W}_2)^{-1} \xvec{\alpha}$, and $\xvec{d}_i^\prime$ indicates the $i$-th row of matrix $(\xmat{I} - \xmat{W}_2)^{-1}  \xmat{W}_1$. Then, $f(h(\xvec{s}_i)) = c_i + \xvec{d}_i^\prime \xvec{g}(\xvec{\varepsilon})$.

\begin{theorem}\label{th:moments}
Let $i \in \{1,\ldots,n\}$. Suppose that the assumptions of Theorem \ref{th:existence} are satisfied.
\begin{itemize}
\item[a)] Let $k \in I\!\!N$. Suppose that $f^{-1}$ is a convex function. If the $2k$-th moments of $\xvec{\varepsilon}$ exist and $E\left( \left( f^{-1}(2 \xvec{d}_i^\prime g(\xvec{\varepsilon}) ) \right)^{2k} \right)  < \infty$, then the $k$-th moment of $Y(\xvec{s}_i)$ exists.
\item[b)] If $\xvec{\varepsilon}$ is sign-symmetric, then $Y(\xvec{s}_i)$ is a symmetric random variable. All odd moments and all conditional odd moments of $Y(\xvec{s}_i)$ are zero, provided that they exist.
\item[c)] If the conditions of Theorem \ref{th:existence} are fulfilled and if $\xvec{\varepsilon}$ is sign-symmetric then \linebreak $\mbox{Cov}(Y(\xvec{s}_i), Y(\xvec{s}_j)) = 0$ for $i \neq j$.
\end{itemize}
\end{theorem}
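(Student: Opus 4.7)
The plan is to leverage the explicit representation $Y(\xvec{s}_i) = \sqrt{f^{-1}(c_i + \xvec{d}_i^\prime \xvec{g}(\xvec{\varepsilon}))}\,\varepsilon(\xvec{s}_i)$ already derived from Theorem \ref{th:existence}, so that both parts reduce to moment and distribution manipulations of $\xvec{\varepsilon}$ combined with the convexity of $f^{-1}$ and the sign structure of $\xvec{g}$.

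For part (a), I would start from $|Y(\xvec{s}_i)|^k = \bigl(f^{-1}(c_i + \xvec{d}_i^\prime \xvec{g}(\xvec{\varepsilon}))\bigr)^{k/2}\,|\varepsilon(\xvec{s}_i)|^k$ and apply Cauchy--Schwarz:
$E[|Y(\xvec{s}_i)|^k] \le \bigl(E\bigl[f^{-1}(c_i + \xvec{d}_i^\prime \xvec{g}(\xvec{\varepsilon}))^k\bigr]\bigr)^{1/2}\bigl(E[|\varepsilon(\xvec{s}_i)|^{2k}]\bigr)^{1/2}.$
The $|\varepsilon|^{2k}$-factor is finite by the $2k$-th moment hypothesis on $\xvec{\varepsilon}$. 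For the other factor, convexity of $f^{-1}$ gives
$f^{-1}(c_i + \xvec{d}_i^\prime \xvec{g}(\xvec{\varepsilon})) = f^{-1}\!\bigl(\tfrac{1}{2}(2c_i)+\tfrac{1}{2}(2\xvec{d}_i^\prime \xvec{g}(\xvec{\varepsilon}))\bigr)\le \tfrac{1}{2}f^{-1}(2c_i)+\tfrac{1}{2}f^{-1}(2\xvec{d}_i^\prime \xvec{g}(\xvec{\varepsilon})),$
and then $(a+b)^k\le 2^{k-1}(a^k+b^k)$ bounds the $k$-th power by a sum of the deterministic term $f^{-1}(2c_i)^k$ and $f^{-1}(2\xvec{d}_i^\prime \xvec{g}(\xvec{\varepsilon}))^k$. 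Taking expectations and invoking Jensen's inequality to dominate the $k$-th moment of the latter by its $2k$-th moment---finite by hypothesis---closes the argument.

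For part (b), the strategy is to use the sign pattern $v_1=\dots=v_n=1$ in the sign-symmetry definition, which gives $\xvec{\varepsilon}\stackrel{d}{=}-\xvec{\varepsilon}$. Applying the measurable map $\xvec{\varepsilon}\mapsto Y(\xvec{s}_i)$ to both sides, and using that in the unified specifications of interest $\xvec{g}$ depends on $\xvec{\varepsilon}$ through quantities unchanged under a global sign flip (so that $h(\xvec{s}_i)$ is invariant while $\varepsilon(\xvec{s}_i)$ flips sign), one obtains $Y(\xvec{s}_i)\stackrel{d}{=}\sqrt{h(\xvec{s}_i)}\,(-\varepsilon(\xvec{s}_i))=-Y(\xvec{s}_i)$, so $Y(\xvec{s}_i)$ is symmetric. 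The odd-moment claim is then immediate: $E[Y(\xvec{s}_i)^{2k+1}]=E[(-Y(\xvec{s}_i))^{2k+1}]=-E[Y(\xvec{s}_i)^{2k+1}]$ whenever the moment exists, forcing the value to vanish. The same sign-flip run conditionally on $\sigma(|\xvec{\varepsilon}|)$---under which the signs of the components of $\xvec{\varepsilon}$ remain i.i.d.\ uniform on $\{\pm 1\}$---delivers the conditional odd-moment statement.

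The main obstacle I anticipate is the invariance step in part (b): the sign-flip trick only delivers $Y(\xvec{s}_i)\stackrel{d}{=}-Y(\xvec{s}_i)$ once one knows that $\xvec{g}(-\xvec{\varepsilon})=\xvec{g}(\xvec{\varepsilon})$, which is automatic for the spARCH/spGARCH, hybrid, and log-GARCH specifications (where $\xvec{g}$ enters through $Y(\xvec{s}_i)^2$, $\varepsilon(\xvec{s}_i)^2$, or $\log|\varepsilon(\xvec{s}_i)|^b$) but requires the asymmetry coefficient $\Theta$ to vanish in the spatial E-GARCH case---consistent with the leverage effect deliberately breaking symmetry there. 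I would therefore record this evenness requirement on $\xvec{g}$ explicitly in the formal proof.
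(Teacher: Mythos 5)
Your proposal is correct and follows essentially the same route as the paper's own proof: isolate the factor $\sqrt{f^{-1}(c_i+\xvec{d}_i^\prime\xvec{g}(\xvec{\varepsilon}))}$ from $\varepsilon(\xvec{s}_i)$ by an elementary product inequality (you use Cauchy--Schwarz plus Lyapunov where the paper uses $2|ab|\le a^2+b^2$), then exploit convexity of $f^{-1}$ at the midpoint of $2c_i$ and $2\xvec{d}_i^\prime\xvec{g}(\xvec{\varepsilon})$ for part (a), and a sign-flip of $\xvec{\varepsilon}$ for part (b). Your closing remark is well taken: the paper's part (b) writes $g(-\xvec{\varepsilon})$ and then silently equates the resulting expression with $-Y(\xvec{s}_i)$, which uses exactly the evenness $\xvec{g}(-\xvec{\varepsilon})=\xvec{g}(\xvec{\varepsilon})$ that you propose to record explicitly (and which indeed fails for the spatial E-GARCH with $\Theta\neq 0$).
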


At first, we extend the E-GARCH model to spatial and spatiotemporal processes. In a time-series analysis, for instance, the aim of the E-GARCH models is to describe asymmetries in a return process. These are caused by the leverage effect whereby an investor reacts differently to negative and positive news (see, e.g., \citealt{Day92,Engle93,Heynen94}). Such a behavior may also arise in spatial settings.

\begin{example}[Exponential spatial GARCH process]\label{example:spEGARCH}
Taking $f(x) = \log(x)$ and $g_i(\xvec{x}) = \Theta x_i + \zeta (|x_i| - E(|x_i|) ) = g(x_i)$ for all $i = 1, \ldots, n$ and $\Theta > 0$, $\zeta \geq 0$ in \eqref{eq:unified2} the exponential spatial GARCH process,  briefly E-spGARCH, is obtained. It holds that
\begin{equation*}
  Y(\xvec{s}_i) = \sqrt{h(\xvec{s}_i)} \varepsilon(\xvec{s}_i) , \quad i=1,...,n
\end{equation*}
with
\begin{equation}\label{eq:spEGARCH2}
\xvec{h}_L = \xvec{\alpha} + \xmat{W}_1  \xvec{G}(\xvec{\varepsilon}) + \xmat{W}_2 \xvec{h}_L
\end{equation}
where $\xvec{G}(\xvec{\varepsilon}) = ( g(\varepsilon(\xvec{s}_i)) )_{i=1,...,n}$.
\end{example}

Taking the exponentials on both sides of (\ref{eq:spEGARCH2}), we get  that
\begin{equation}\label{eq:spEGARCH}
h(\xvec{s}_i) = \exp(\alpha_i) \; \left( \prod_{j=1}^n h(\xvec{s}_j)^{w_{2,ij}} \right) \; \left( \prod_{j=1}^n \exp(w_{1,ij} g(\varepsilon(\xvec{s}_j))) \right) , \quad i = 1, \ldots, n .
\end{equation}
This shows that the process possesses multiplicative dynamics and, in the case of an spGARCH process, additive dynamics (see also \citealt{Francq11} regarding time series GARCH models). Furthermore, a logarithmic spatial GARCH (log-spGARCH) process can be defined that is analogous of the log-GARCH process of  \cite{Pantula86,Geweke86,Milhoj87}. Note that the hybrid model collapses to a log-spGARCH model, if $\xmat{W}_1$ and $\xmat{W}_2$ are identical.

\begin{example}[Spatial logarithmic GARCH process]\label{example:spLGARCH}
Choosing  $f(x) = \log(x)$ and $g_i(\xvec{x}) = \log(|x_i|^b)$ for all $i = 1, \ldots, n$ and $b > 0$ in \eqref{eq:unified2} the spatial
logarithmic spatial GARCH process, briefly log-spGARCH. That is,
\begin{equation*}
  Y(\xvec{s}_i) = \sqrt{h(\xvec{s}_i)} \varepsilon(\xvec{s}_i), \quad i=1,..., n
\end{equation*}
with
\begin{equation}\label{eq:spLGARCH}
\xvec{h}_L = \xvec{\alpha} + \xmat{W}_1 \xvec{G}_L + \xmat{W}_2 \xvec{h}_L
\end{equation}
where $\xvec{G}_L = (\log(|\varepsilon(\xvec{s}_i)|^b) )_{i=1,...,n}$.
\end{example}

\subsubsection{Relationship between GARCH Models of Type I and Type II}\label{sec:g_and_gamma}

The approaches presented above have a similar structure. In the Type I approach, $\xvec{F}$ is considered to be a function of $\xvec{Y}^{(2)}$, while in the Type II attempt it is  a function of $\xvec{\varepsilon}$. Certainly, the question arises as to whether there is a relationship between both attempts.

First, suppose that a GARCH Type I process is given. Then
\[ \xvec{F} = \xvec{\alpha} + \xmat{W}_1 \xvec{\gamma}(\xmat{E} \xvec{h}) + \xmat{W}_2 \xvec{F} . \]
In Theorem \ref{solution}, a sufficient condition for the existence of a solution $\xvec{h}$ to this equation is given. If it can be solved, then $\xvec{h}$ is a function of $\xmat{E}$, $\xvec{\alpha}$, $\xmat{W}_1$, and $\xmat{W}_2$. Thus,
\[ \xvec{g}(\xvec{\varepsilon}) = \xvec{\gamma}(\xmat{E} \; \xvec{h}(\xmat{E}, \xvec{\alpha}, \xmat{W}_1, \xmat{W}_2))  \]
and the GARCH Type I model can be written as a GARCH Type II model. However, the structure of the function $\xvec{g}$ gets more complicated depending on the weight matrices.
For instance, for the spGARCH, we get that
\[ \xvec{\gamma}(\xvec{Y}^{(2)}) = \xvec{Y}^{(2)} = \xmat{E} \xvec{h} =
 \xmat{E} ( \xmat{I} - \xmat{W}_1 \xmat{E} - \xmat{W}_2)^{-1} \xvec{\alpha} = \xvec{g}(\xvec{\varepsilon}) .  \]
This shows that $\xvec{g}$ is a function depending on $\xmat{W}_1$, $\xmat{W}_2$, and $\xvec{\alpha}$.

For the H-spGARCH process, it follows with $\xvec{Y}^{(2)}_L = (\log(Y(\xvec{s}))^2,\ldots, \log(Y(\xvec{s}_n)^2))^\prime$ and
$\xvec{\varepsilon}^{(2)}_L = (\log(\varepsilon(\xvec{s}_1)^2,\ldots,\log(\varepsilon(\xvec{s}_n)^2)))^\prime$ that
$\xvec{\gamma}(\xvec{Y}^{(2)}) = \xvec{Y}_L^{(2)} = \xvec{h} + \xvec{\varepsilon}_L^{(2)}$.
Thus,\vspace*{-.7cm}

\begin{equation}\label{epsHARCH}
\xvec{h} = \xvec{\alpha} + (\xmat{W}_1 + \xmat{W}_2) \xvec{h} + \xmat{W}_1 \xvec{\varepsilon}^{(2)}_L = (\xmat{I} - \xmat{W}_1 - \xmat{W}_2)^{-1} (\xvec{\alpha} + \xmat{W}_1 \xvec{\varepsilon}^{(2)}_L )
\end{equation}
and, as a consequence,
\begin{equation*}
\xvec{\gamma}(\xvec{Y}^{(2)}) =  (\xmat{I} - \xmat{W}_1 - \xmat{W}_2)^{-1} \xvec{\alpha} + (\xmat{I} + (\xmat{I} - \xmat{W}_1 - \xmat{W}_2)^{-1} \xmat{W}_1 ) \xvec{\varepsilon}^{(2)}_L  = \xvec{g}(\xvec{\varepsilon}).
\end{equation*}

Secondly, let us assume that a GARCH Type II model is given and we want to transform it into a GARCH Type I model. Then
\begin{equation}\label{eq:transform}
\xvec{F} = \xvec{\alpha} + \xmat{W}_1 \xvec{g}( (Y(\xvec{s}_i)/\sqrt{h(\xvec{s}_i)} )_{i=1,...,n} ) + \xmat{W}_2 \xvec{F}  .
\end{equation}
This equation has to be solved with respect to $\xvec{h}$ so that $\xvec{h}$ becomes a function of $\xvec{Y}$. If there is a solution then
\[   g(\xvec{\varepsilon}) = \xvec{g}( (Y(\xvec{s}_i)/\sqrt{h(\xvec{s}_i)} )_{i=1,...,n} )  \]
and we have the desired relationship.  General statements about the existence of a solution $\xvec{h}$ in \eqref{eq:transform} are difficult to make, but of course, the theorem of Banach could once again be applied.

For the examples considered in the last section, the situation turns out to be simpler. Recall that $h(\xvec{s}_i) = Y(\xvec{s}_i)^2/\varepsilon(\xvec{s}_i)^2$. Thus, $\log(h(\xvec{s}_i)) = \log(Y(\xvec{s}_i)^2) - \log(\varepsilon(\xvec{s}_i)^2)$ (i.e., $\xvec{h}_L = \xvec{Y}^{(2)}_L - \xvec{\varepsilon}^{(2)}_L$). Inserting this into \eqref{eq:spEGARCH2} leads to
\begin{equation}\label{spEGARCH3}
(\xmat{I} - \xmat{W}_2) \xvec{Y}^{(2)}_L - \xvec{\alpha} = \xmat{W}_1 \xvec{G} + (\xmat{I} - \xmat{W}_2) \xvec{\varepsilon}^{(2)}_L,
\end{equation}
where $\xvec{G} = (g(\varepsilon(\xvec{s}_1)), \ldots, g(\varepsilon(\xvec{s}_n)))'$.
Thus, the right side of \eqref{spEGARCH3} is a function of $\xvec{\varepsilon}$, say $l(\xvec{\varepsilon})$. Suppose that $l$ is invertible. Then, we obtain
\begin{equation}\label{eq:inverse_f}
 \xvec{\varepsilon} = l^{-1}((\xmat{I} - \xmat{W}_2) \xvec{Y}^{(2)}_L - \xvec{\alpha})
\end{equation}
and $\gamma_i(\xvec{Y}) = g(\varepsilon(\xvec{s}_i))$.

Finally, for the log-spGARCH, an explicit solution of \eqref{eq:spLGARCH} exists because
\begin{equation*}
 \xvec{h}_L = (\xmat{I} + 0.5b \xmat{W}_1 - \xmat{W}_2)^{-1} (\xvec{\alpha} + b \xmat{W}_1 {\xvec{Y}}_L) \, ,
\end{equation*}
where ${\xvec{Y}}_L = (\log(|Y(\xvec{s}_1)|),\ldots, \log(|Y(\xvec{s}_n)|))^\prime$.

In summary, GARCH Type I models can be more easily transformed into a GARCH Type II models than vice versa. For the examples mentioned above, we derived the corresponding transformations. Finally, we give an overview on selected nested models in Figure \ref{fig:overview}. Some of these show additive dynamics (red, dashed box), while the other models show multiplicative dynamics (blue, dashed box). Of course, this overview is not exhaustive and there are many other spatial GARCH-type models that could be nested in this unified approach, such as threshold or multivariate GARCH models.

\begin{sidewaysfigure}
  \centering
  \begin{tikzpicture}[
      grow                    = right,
      sibling distance        = 3cm,
      level distance          = 9cm,
      node distance           = 2.5cm,
      block/.style={
          rectangle,
          thick,
          align=center,
          minimum height=2em,
          minimum width=7.5cm
    }]
    \node (y) [block, draw] {$\xvec{Y} = \text{diag}(\xvec{h})^{1/2} \xvec{\varepsilon}$  \\ with $\xvec{F} = f(\xvec{h})$ and};
    \node (general_gamma) [block, draw, below of = y] {Type I: $\quad$ $\xvec{F} = \xvec{\alpha} + \xmat{W}_1 \xvec{\gamma}(\xvec{Y}) + \xmat{W}_2 \xvec{F}$, or}
        child { node [draw, block] (H-spGARCH) {Hybrid spatial GARCH model (Ex \ref{example:spHARCH}) \\ $\gamma_i(\xvec{x}) = \log(x_i)$}
                edge from parent[draw=none]
        }
        child { node [draw, block] (spGARCH) {\textbf{*} Spatial GARCH model (Ex \ref{example:spGARCH}) \\ $\gamma_i(\xvec{x}) = x_i$}
                edge from parent[draw=none]
        }
    ;
    \node (general_g) [block, draw, below = 4cm of general_gamma] {Type II: $\quad$ $\xvec{F} = \xvec{\alpha} + \xmat{W}_1 \xvec{g}(\xvec{\varepsilon}) + \xmat{W}_2 \xvec{F}$}
        child { node [draw, block] (E-spGARCH) {\textbf{*} Exponential spatial GARCH model (Ex \ref{example:spEGARCH})\\ $g_i(\xvec{x}) = \Theta x_i + \zeta (|x_i| - E(|x_i|) )$}
                edge from parent[draw=none]
        }
        child { node [draw, block] (log-spGARCH) {\textbf{*} Logarithmic spatial GARCH model (Ex \ref{example:spLGARCH}) \\ $g_i(\xvec{x}) = \log(|x_i|^b)$}
                edge from parent[draw=none]
        }
    ;
    \node [draw, block, below = 0.7cm of spGARCH, anchor = west] (spARCH) {Spatial ARCH model (Ex \ref{example:spARCH}) \\ $\xmat{W}_2 = \xmat{0}$};
    \node [draw, block, below = 0.7cm of log-spGARCH, anchor = west] (log-spARCH) {Logarithmic spatial ARCH model \\ $\xmat{W}_2 = \xmat{0}$};

    \draw[red,thick,dashed] ($(spGARCH)+(-3.9,0.8)$)  rectangle ($(spARCH)+(3.9,-0.8)$);
    \draw[blue,thick,dashed] ($(H-spGARCH)+(-3.9,0.8)$)  rectangle ($(log-spARCH)+(3.9,-2.5)$);
    \draw[red]               (general_gamma.east) -- (spGARCH.west) node[midway,fill=white] {\scriptsize{$f(x) = x$}};
    \draw[blue]              (general_gamma.east) -- (H-spGARCH.west) node[midway,fill=white] {\scriptsize{$f(x) = \log(x)$}}
                             (general_g.east) -- (E-spGARCH.west) node[midway,fill=white] {\scriptsize{$f(x) = \log(x)$}}
                             (general_g.east) -- (log-spGARCH.west) node[midway,fill=white] {\scriptsize{$f(x) = \log(x)$}};

    \draw[black, ->] ($(general_gamma.south)+(0.1,0)$) -- ($(general_g.north)+(0.1,0)$) node[midway,right] {\scriptsize{$\xvec{\gamma}(\xvec{Y}^{(2)}) = {\xvec{g}} (\xvec{\varepsilon}; \xvec{\alpha}, \xmat{W}_1, \xmat{W}_2, \ldots)$}} ;
    \draw[black, ->] (general_g.north) -- (general_gamma.south) node[midway,left] {\scriptsize{$\xvec{g}(\xvec{\varepsilon}) = {\xvec{\gamma}} (\xvec{Y}^{(2)}; \xvec{\alpha}, \xmat{W}_1, \xmat{W}_2, \ldots)$}} ;
\end{tikzpicture}
  \caption{Overview of selected models that are nested in the unified model, where all models in the red box show additive dynamics and the remaining models, framed by a dashed blue line, have multiplicative dynamics. Novel models proposed in this paper are indicated using an asterisk.}\label{fig:overview}
\end{sidewaysfigure}

\section{Statistical Inference}

In the following section, we firstly discuss the choice of the weight matrices in more detail. In a general setting, $\xmat{W}_1$ and $\xmat{W}_2$ have $n(n-1)$ free parameters, while only $n$ values are observed. In spatial econometrics, these matrices are therefore usually replaced with a parametric model to control the influence of adjacent regions. Alternatively, they might instead be estimated using statistical learning approaches, e.g., lasso-type estimators under the assumption of a certain degree of sparsity. This section of the paper goes on to discuss two estimation strategies, namely nonlinear least squares estimators and estimators that are based on the maximum-likelihood principle.

\subsection{Choice of the Weight Matrices}

There is a great flexibility in the choice of the weight matrices (see \citealt{Getis09} for an overview). In practice, these are usually dependent upon additional parameters and spatial locations.
Frequently, it is assumed that $\xmat{W}_1 = \rho \xmat{W}^{*}_1$ and $\xmat{W}_2 = \lambda \xmat{W}^{*}_2$ with the predefined, known matrices $\xmat{W}^{*}_1$ and $\xmat{W}^{*}_2$. That is, $\xmat{W}^{*}_1$ and $\xmat{W}^{*}_2$ describe the structure of the spatial dependence, with the weights as a multiple of these specific matrices. In settings such as these, it is easy to test whether a random process exhibits such a spatial dependence, by testing the parameters $\rho$ and $\lambda$. As with time-series GARCH models, $\rho$ measures the extent to which a volatility shock in one region spills over to neighboring regions, while $\rho + \lambda$ gives an impression how fast this effect will fade out in space (see, e.g., \citealt{Campbell97}). A more general approach can be obtained by choosing $\xmat{W}_{\cdot} = \text{diag}(\rho_1, \ldots, \rho_1, \ldots, \rho_k, \ldots, \rho_k) {\xmat{W}_{\cdot}^{*}}$ as the weights. Here, different areas are weighted in different ways. For instance, all counties of state $i$ are weighted by $\rho_i$, while counties of another state, $j$, get a different weighting factor, $\rho_j$. Alternatively, $\xmat{W}_{\cdot}^{*}$ could be chosen as $(K_\theta(\xvec{s}_i - \xvec{s}_j))_{i,j = 1, \ldots, n}$ with a known function, $K$. In this case, the spatial correlation depends on the distance between two locations. For instance, inverse distance weighting schemes $K(\xvec{x}) = ||\xvec{x}||^{-k}$ with $k$ being estimated, or anisotropic weighting schemes dependent upon the bearing between two locations.

\subsection{Parameter Estimation}

Below, we assume that the weight matrices have the structure
\begin{equation}\label{weight}
\xmat{W}_1 = \rho \xmat{W}_1^*, \xmat{W}_2 = \lambda \xmat{W}_2^*, \xvec{\alpha} = \alpha \xvec{1} .
\end{equation}
Thus, the model has three parameters to be estimated, $\xvec{\vartheta} = (\rho, \lambda, \alpha)^\prime$. Let $\xvec{\vartheta}_0$ denote the true parameters.

\subsubsection{Nonlinear Least Squares Estimators}

One possible method for estimating the parameters is the nonlinear least squares approach (NLSE). We initially limit ourselves to GARCH Type I processes.

Squaring the components of \eqref{eq:initial} and taking the logarithms, we get that for $i=1, \ldots, n$
\begin{eqnarray*}
\mbox{log}(Y(\xvec{s}_i)^2) & = & \mbox{log}(h(\xvec{s}_i)) + \mbox{log}(\varepsilon(\xvec{s}_i)^2)\\
& = & E( \mbox{log}(\varepsilon(\xvec{s}_i)^2) ) + \mbox{log}(h(\xvec{s}_i)) + \eta(\xvec{s}_i)
\end{eqnarray*}
with $\eta(\xvec{s}_i) = \mbox{log}(\varepsilon(\xvec{s}_i)^2) - E( \mbox{log}(\varepsilon(\xvec{s}_i)^2) )$. Now $\eta(\xvec{s}_i), i=1,...,n$ is a white noise process. Moreover, it follows with $\tau(x) = f(\mbox{exp}(x))$ that
\begin{eqnarray*}
\xvec{F} & = & ( \tau(\mbox{log}(h(\xvec{s}_i)) )_{i=1,...,n} = (\xmat{I} -  \lambda \xmat{W}_2^*)^{-1} (
\alpha \xvec{1} + \rho \xmat{W}_1^* \xvec{\gamma}(\xvec{Y}^{(2)} ) ) \\
& = &  (\xmat{I} -  \lambda \xmat{W}_2^*)^{-1} (\alpha \xvec{1} + \rho \xmat{W}_1^* \tilde{\xvec{\gamma}}(\mbox{log}(\xvec{Y}^{(2)}) ) )
\end{eqnarray*}
where $ \tilde{\xvec{\gamma}}(\xvec{x}) = ( \gamma_i(\mbox{exp}(x_1),..., \mbox{exp}(x_n)) )_{i=1,...,n}$.

\noindent Now, let $( c_i(\lambda) )_{i=1,...,n} = (\xmat{I} -  \lambda \xmat{W}_2^*)^{-1} \xvec{1}$ and $( \xvec{d}_i(\lambda)^\prime )_{i=1,...,n} = (\xmat{I} -  \lambda \xmat{W}_2^*)^{-1} \xmat{W}_1^*$.  In order to denote the dependence on $\xvec{\vartheta}$ we write $h_{\xvec{\vartheta}}(\xvec{s}_i)$, $i=1,...,n$. Then
\[ \mbox{log}(h_{\xvec{\vartheta}}(\xvec{s}_i)) = \tau^{-1}(\alpha \, c_i(\lambda) + \rho \, \xvec{d}_i(\lambda)^\prime \,  \tilde{\xvec{\gamma}}(\mbox{log}(\xvec{Y}^{(2)} ) ) ) . \]

Here, we assume that $c = E( \mbox{log}(\varepsilon(\xvec{s}_i)^2) )$ is a known quantity. Using $H_i = \mbox{log}(Y(\xvec{s}_i)^2) - E( \mbox{log}(\varepsilon(\xvec{s}_i)^2) )$ and $\xvec{H} = ( H_i )_{i=1,...,n}$ the estimators of the parameters $\alpha$, $\lambda$, and $\rho$ are obtained by minimizing the nonlinear sum of squares
\[
\sum_{i=1}^n \left( H_i - \mbox{log}(h_{\xvec{\vartheta}}(\xvec{s}_i) \right)^2 = \sum_{i=1}^n \left( H_i - \tau^{-1}(\alpha \, c_i(\lambda) + \rho \, \xvec{d}_i(\lambda)^\prime \,  \tilde{\xvec{\gamma}}(\xvec{H} + c \xvec{1} ) ) \right)^2 \]
with respect to $\xvec{\vartheta}$.

Although $\tau^{-1}$ is a known function, this minimization problem is complex. Thus, we will impose further assumptions which are fulfilled for all relevant special cases. We will suppose that $\xvec{\gamma}(\xvec{x}) = ( \gamma( x_i ) )$ with a known function $\gamma$. Consequently, $\tilde{\xvec{\gamma}}(\xvec{x}) = ( \tilde{\gamma}(x_i) )_{i=1,...,n}$ with $\tilde{\gamma}(x) = \gamma( \mbox{exp}(x) )$, which leads to the easier minimization of
\[ Q_n(\xvec{\vartheta} ) = \frac{1}{n} \; \sum_{i=1}^n \left( H_i - \tau^{-1}\left(\alpha \, c_i(\lambda) + \rho \, \xvec{d}_i(\lambda)^\prime \,  \left( \tilde{\gamma}( H_v + c )\right) \right)_{v=1,...,n}
 \right)^2 . \]
Note that since the $(i,i)$-th element of $\xmat{W}_1^*$ is zero it follows that \linebreak $\xvec{d}_i(\lambda)^\prime \,  \left( \tilde{\gamma}( H_v + c ) \right)_{v=1,...,n}$ is independent of $H_i$.
Minimization problems of that type have been studied in detail in, e.g., \cite{Amemiya85,Potscher97,Newey94}. There, sufficient conditions are given for the consistence and asymptotic normality of the resulting estimators under various conditions. Note that, in the present case, $\{ H_i \}$ is a strictly stationary process.

To prove the consistency of the NLSE, we can apply Lemma 3 of \cite{Amemiya85}, as we will briefly sketch below. It is sufficient to show that there is a function $Q_0(\xvec{\vartheta})$ such that $Q_0(\xvec{\vartheta})$ is uniquely minimized at $\xvec{\vartheta}_0$, that $\Theta$ is compact, that $Q_0(\xvec{\vartheta})$ is continuous, and that $Q_n(\xvec{\vartheta})$ converges uniformly in probability to $Q_0(\xvec{\vartheta})$. Recall that $H_i = \log(h_{\xvec{\vartheta}_0}(\xvec{s}_i)) + \log(\varepsilon(\xvec{s}_i)^2) - E(\log(\varepsilon(\xvec{s}_i))) = \eta(\xvec{s}_i) + \log(h_{\xvec{\vartheta}_0}(\xvec{s}_i))$. Thus, in the present case, the function $Q_0(\xvec{\vartheta})$ can be found by splitting $Q_n(\xvec{\vartheta})$ into eight parts, as follows
\[ Q_n(\xvec{\vartheta}) = I_n + II_n + III_n + IV_n + V_n + VI_n + VII_n + VIII_n \]
with
\begin{small}
\begin{eqnarray*}
  I_n    & = & \frac{1}{n} \sum_{i=1}^n \eta(\xvec{s}_i)^2, \quad II_n = \frac{1}{n} \sum_{i=1}^n \left( E( \mbox{log}(h_{\xvec{\vartheta}}(\xvec{s}_i)) - E(\mbox{log}(h_{\xvec{\vartheta}_0}(\xvec{s}_i)) ) \right)^2 , \\
  III_n  & = & \frac{1}{n} \sum_{i=1}^n \left( \mbox{log}(h_{\xvec{\vartheta}_0}(\xvec{s}_i)) - E( \mbox{log}(h_{\xvec{\vartheta}_0}(\xvec{s}_i))) \right)^2 \\
  IV_n   & = & \frac{1}{n} \sum_{i=1}^n \left( \mbox{log}(h_{\xvec{\vartheta}}(\xvec{s}_i)) - E( \mbox{log}(h_{\xvec{\vartheta}}(\xvec{s}_i))) \right)^2 \\
  V_n    & = & \frac{2}{n} \sum_{i=1}^n \eta(\xvec{s}_i)  \left( \mbox{log}(h_{\xvec{\vartheta}_0}(\xvec{s}_i)) - \mbox{log}(h_{\xvec{\vartheta}}(\xvec{s}_i) ) \right)  \\
  VI_n   & = & \frac{2}{n} \sum_{i=1}^n (E(\log(h_{\xvec{\vartheta}_0}(\xvec{s}_i))) - E(\log(h_{\xvec{\vartheta}}(\xvec{s}_i)))) (\log(h_{\xvec{\vartheta}_0}(\xvec{s}_i)) - E(\log h_{\xvec{\vartheta}_0}(\xvec{s}_i))) ,  \\
  VII_n  & = & -\frac{2}{n} \sum_{i=1}^n (E(\log(h_{\xvec{\vartheta}_0}(\xvec{s}_i))) - E(\log(h_{\xvec{\vartheta}}(\xvec{s}_i)))) (\log(h_{\xvec{\vartheta}}(\xvec{s}_i)) - E(\log h_{\xvec{\vartheta}}(\xvec{s}_i))) ,  \\
  VIII_n & = & -\frac{2}{n} \sum_{i=1}^n (\log(h_{\xvec{\vartheta}_0}(\xvec{s}_i)) - E(\log h_{\xvec{\vartheta}_0}(\xvec{s}_i))) (\log(h_{\xvec{\vartheta}}(\xvec{s}_i)) - E(\log h_{\xvec{\vartheta}}(\xvec{s}_i))).  \\
\end{eqnarray*}
\end{small}
Assuming $\{ \varepsilon(\xvec{s}_i) \}$ to be an independent and identically distributed random process with existing moment $E( \mbox{log}(\varepsilon(\xvec{s}_1)^2) )$, the weak law of large numbers (WLLN) implies that $I_n$ converges in probability to $E( \mbox{log}(\varepsilon(\xvec{s}_1)^2) )$.
The quantity $II_n$ is zero for $\xvec{\vartheta} = \xvec{\vartheta}_0$ and we must ensure that this is the only zero. The limit of this quantity determines $Q_0(\xvec{\vartheta})$ up to a constant. The quantity $III_n$ does not depend on $\xvec{\vartheta}$ at all.

For the spGARCH model, for instance, it is assumed that $\varepsilon(\xvec{s}_i)$ has a bounded support (see \cite{Otto18_spARCH} for spARCH). This assumption is sufficient to ensure the existence of a strictly stationary solution (see Theorem 1). Note that this implies that $Y(\xvec{s}_i)$ is uniformly bounded.
Since $f(x) = x$ we get that  $\tau^{-1} = \mbox{log}(x)$ and $\mbox{log}(h_{\xvec{\vartheta}}(\xvec{s}_i) =\mbox{log}(\alpha c_i(\lambda + \rho \xvec{d}_i(\lambda)^\prime \xvec{Y}^{(2)}))$.
Then it follows with $\kappa = \rho/\alpha$ that
\[ IV_n =  \frac{1}{n} \sum_{i=1}^n \left( \mbox{log}\left(1 + \kappa \frac{\xvec{d}_i(\lambda)^\prime}{c_i(\lambda)} \xvec{Y}^{(2)}   \right) - E\left( \mbox{log}\left(1 + \kappa \frac{\xvec{d}_i(\lambda)^\prime }{c_i(\lambda)} \xvec{Y}^{(2)} \right)  \right) \right) \]
\[ =  \kappa \; \frac{1}{n} \sum_{i=1}^n \left(  \frac{\xvec{d}_i(\lambda)^\prime}{c_i(\lambda)} \left( \xvec{Y}^{(2)}   - E( \xvec{Y}^{(2)} )  \right) \right) + o_p(1)  \]
which converges to zero in probability, if the second moments of $\gamma(Y(\xvec{s}_i)^2), i=1,...,n$ exist and $\frac{\xvec{1}^\prime \tilde{\xmat{W}} Cov(\xvec{Y}^{(2)}) \tilde{\xmat{W}}^\prime \xvec{1}}{n^2} \rightarrow 0$ with $\tilde{\xmat{W}} = (\frac{\xvec{d}_i(\lambda)^\prime}{c_i(\lambda)} )_{i=1,...,n}$. Using Lemma 2.9 of \cite{Newey94} the convergence can be shown to be uniform.

Since $\{ \eta(\xvec{s}_i) \}$ follows a WLLN and $Y(\xvec{s}_i)$ is bounded the quantity $V_n$ follows a WLLN as well.

As a consequence, $VI_n$ to $VIII_n$ do also approach zero, because of the Cauchy-Schwarz inequality, e.g., $| VI_n | \le \sqrt{ II_n  III_n}$.

\subsubsection{Maximum Likelihood Estimator}

Another approach to estimating the parameters of our model uses the maximum-likelihood principle. Below, we initially confine ourselves to GARCH Type II processes.

Our first aim is to derive the likelihood function. From \eqref{eq:unified}, we can obtain in general that
\[ \xvec{F} = (\xmat{I} - \xmat{W}_2)^{-1} (\xvec{\alpha} + \xmat{W}_1 \xvec{g} (\xvec{\varepsilon}) ) . \]
For the $i$-th location, $f(h(\xvec{s}_i))$ is equal to $c_i + \xvec{d}_i^\prime \xvec{g}(\xvec{\varepsilon})$. If $f$ is invertible, we obtain that $h(\xvec{s}_i) = f^{-1}( c_i + \xvec{d}_i^\prime \xvec{g}(\xvec{\varepsilon}));$ thus,
\[ Y(\xvec{s}_i) = \varepsilon(\xvec{s}_i)  \sqrt{ f^{-1}( c_i + \xvec{d}_i^\prime \xvec{g}(\xvec{\varepsilon}))} \]
if $h(\xvec{s}_i) \ge 0$. Suppose that $Y(\xvec{s}_i)$ is a continuously differentiable function of $\xvec{\varepsilon}$. The $(i,j)$th element $J_{ij}$ of the Jacobian matrix $J  = \left( \frac{\partial Y(\xvec{s}_i)}{\partial \varepsilon(\xvec{s}_j)} \right)_{i,j=1,\ldots,n}$ is given by
\begin{footnotesize}
\begin{equation*}
  \frac{1}{2} \frac{1}{\sqrt{f^{-1}( c_i + \xvec{d}_i^\prime \xvec{g}(\xvec{\varepsilon}))}} \; \frac{1}{f^\prime(f^{-1}( c_i + \xvec{d}_i^\prime \xvec{g}(\xvec{\varepsilon})))} \;
\xvec{d}_i^\prime \frac{\partial \xvec{g}(\xvec{\varepsilon})}{\partial \varepsilon(\xvec{s}_j)} \; \varepsilon(\xvec{s}_i) + \sqrt{f^{-1}( c_i + \xvec{d}_i^\prime \xvec{g} (\xvec{\varepsilon}))} I_{\{i\}}(j)
 .
\end{equation*}
\end{footnotesize}
The indicator function on a set $A$ is denoted by $I_{\{A\}}(x)$. Following the inverse function theorem, the inverse function exists in a neighborhood of all points, for which the determinant of the Jacobian matrix is not equal to zero. In that case, we have $\xvec{\varepsilon} = \xi(\xvec{Y})$.

In the examples that we have considered above, the inverse function $\xi$ is easily obtained. For instance, the inverse function for the E-spGARCH model is given in (\ref{eq:inverse_f}). For spGARCH and H-spGARCH models, it holds that $\varepsilon(\xvec{s}_i) = Y(\xvec{s}_i)/\sqrt{h(\xvec{s}_i)}$, where $h$ is a function of $\xvec{Y}$ (i.e., $\xi(\xvec{Y}) = \xvec{Y} \circ (1/\sqrt{h(\xvec{s}_i)} )_{i=1,\ldots,n}$).
If the inverse function exists, then the transformation rule can be applied, and it leads to the likelihood function
\begin{equation}\label{eq:likelihood}
f_{\xvec{Y}}(\xvec{y}) = f_{\xvec{\varepsilon}}(\xi(\xvec{y})) \frac{1}{\mbox{det}(J(\xi(\xvec{y})))} .
\end{equation}
For the four examples considered above, it holds that
\[ J_{ij} =  \frac{1}{2} \; \frac{\varepsilon(\xvec{s}_i)}{ \sqrt{h(\xvec{s}_i)}} \; \frac{\partial h(\xvec{s}_i)}{\partial \varepsilon(\xvec{s}_j)} + \sqrt{h(\xvec{s}_i)} I_{\{i\}}(j) . \]
In the Appendix, we discuss how the partial derivatives $\frac{\partial h(\xvec{s}_i)}{\partial \varepsilon(\xvec{s}_j)}$ can be obtained for the models mentioned above.

The main problem with calculating the likelihood function is the determinant $\mbox{det}(J_{ij})$. Unfortunately, there is no explicit expression for this quantity due to the general formulation of the process. For that reason, it is difficult to analyze the likelihood function and to make statements about the properties of the corresponding ML estimators. However, it is possible to use any nonlinear optimization algorithm, such as \cite{RSolnp}, to determine the ML estimators numerically. Since all restrictions on the parameters are linear, any Kuhn-Karush-Tucker point would coincide with the global optimum. Thus, we have analyzed the performance and properties of both estimation procedures in a series of Monte Carlo simulation studies, which are explained below. In particular, we discuss their computational implementation and the insights that we gain from these studies.

\section{Computational Implementation and Simulation Studies}

Over the course of the following paragraphs, our aim is to compare the considered spatial GARCH-type models. We ask, how well does the spGARCH model fit the data if it follows an E-spGARCH, log-spGARCH, or H-spGARCH model in reality?  What do we lose when choosing the wrong model? And, how different are the models from one another?

Here, we assume the simple parametric setting given by \eqref{weight}.  The weighting matrices $\xmat{W}^*_1$ and $\xmat{W}^*_2$ were set as row-standardized Rook and Queen contiguity matrices, respectively. Furthermore, the upper diagonal elements were set to zero to avoid negative values of $h(\xvec{s}_i)$, which may arise while simulating spGARCH processes. Thus, the processes can be interpreted as directional spatial processes with the origin in the upper right corner. The simulation study is performed on a $15 \times 15$ spatial unit grid (i.e., $D_{\xvec{s}} = \{\xvec{s} = (s_1, s_2)' \in \xset{Z}^2 : 1 \leq s_1, s_2 \leq 15 \}$), resulting in $n = 15^2 = 225$ observations, with $m = 1000$ replications.

We computationally implemented the unified approach in the \textbf{R}-package \texttt{spGARCH} from version \texttt{0.2.0}. The package provides several main functions; foremost it provides functions for parameter estimation for both spatial GARCH-type and mixed spatial autoregressive models that have spatial GARCH-type residuals, but also contains a random number generator for spatial GARCH-type models.
All examples are included if the \texttt{type} is set to \texttt{spGARCH}, \texttt{e-spGARCH}, or \texttt{log-spGARCH}. Regarding the error distribution $f_\varepsilon$, the package implements standard normally distributed random errors for the E-spGARCH, log-spGARCH, and spGARCH models. In the latter case, the normal distribution is truncated if there is no permutation such that the weighting matrix is triangular. Please refer to \cite{Otto19_RJournal} for a detailed description of the package.

\begin{figure}
  \centering
  \small{Nonlinear least-squares approach}\\
  \includegraphics[width=0.6\textwidth]{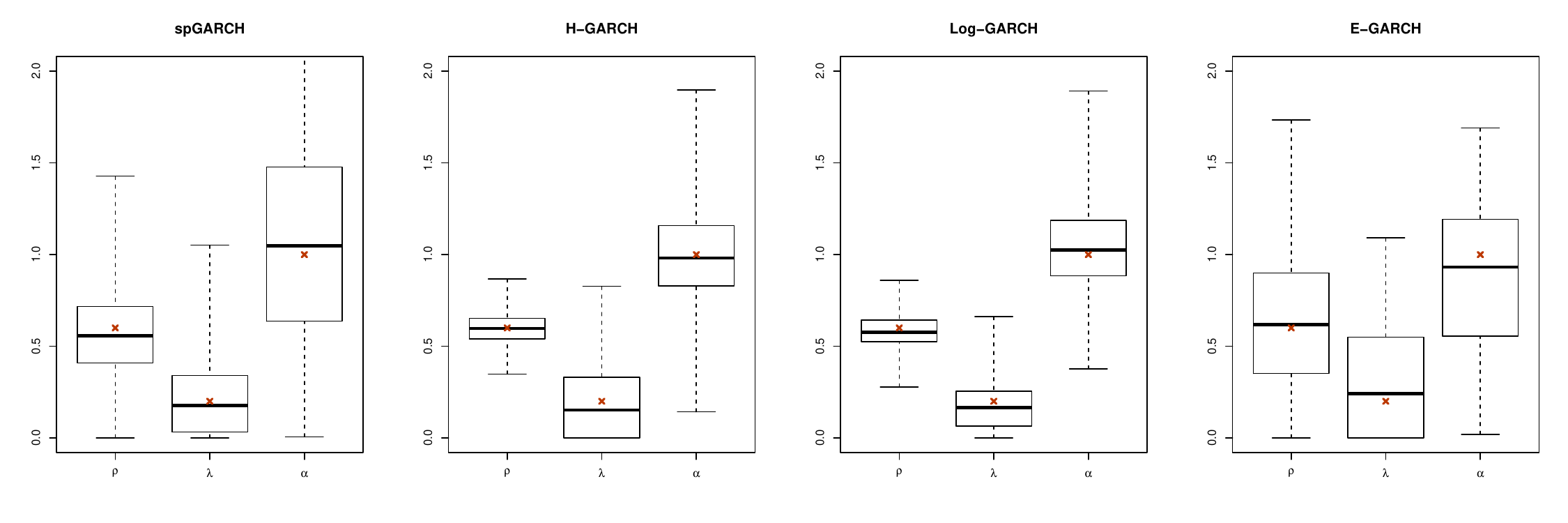}\\
  \small{Maximum-likelihood approach}\\
  \includegraphics[width=0.6\textwidth]{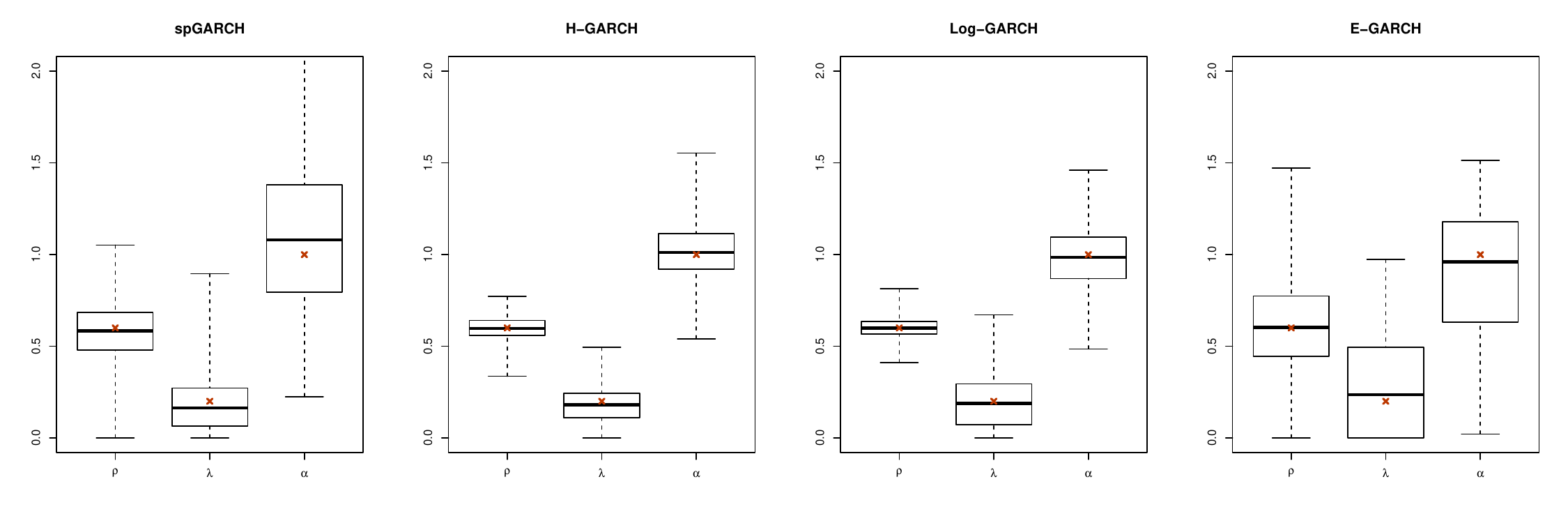}
  \caption{Boxplot of the estimated parameters for 1000 replications to illustrate the performance of the proposed estimators (first row: nonlinear least-squares estimators, second row: maximum-likelihood estimators). The true underlying values are indicated using an asterisk for each parameter.}\label{fig:estimation}
\end{figure}

For model selection, criteria based on the maximum value of the likelihood function, or on the residual's variance, can be used, such as the Akaike or Schwarz information criteria. Because the Bayesian information criterion (BIC) would tend to select a rather parsimonious model, we keep the number of unknown parameters constant for all models (i.e., $\Theta = 0.5$, $\zeta = 0$, and $b = 2$) that are assumed to be known for estimation. Thus, we analyze the performance of the model selection directly from the attained maximum likelihood.

The average values of logarithmic likelihood are reported in Table \ref{table:selection} and the percentages of cases in which the models were selected by this procedure are given in parentheses. The correct model was selected in the majority of cases. Generally, we observe three different groups of models. While the H-spGARCH and log-spGARCH models lead to similar results and values of the log-likelihood, the spGARCH and E-spGARCH models are different.

In contrast to a spGARCH process with additive dynamics, the spatial E-spGARCH process has a multiplicative structure. Thus, it is important that the correct model is selected. In fact, the average values of log-likelihood for the spGARCH and E-spGARCH models are not similar. On the contrary, we observe similar values for the H-spGARCH and log-spGARCH approaches. As the hybrid model would be identical to the log-spGARCH process, if $\xmat{W}^*_1 = \xmat{W}^*_2$ then only a few links in the weighting matrix distinguish between these two models. Thus, these models are more similar than the spGARCH and E-spGARCH models.

\begin{table}
  \caption{The average value of the logarithmic likelihood function and the percentage of selections within 1000 trials per row. The rows give the simulated models and the columns give the estimated models, so that correctly selected models can be read on the diagonal. The largest average log-likelihood and percentage of selections are printed in bold in each row.}\label{table:selection}
  \centering
  \vspace*{.3cm}
  \hspace*{-1.3cm}
  \begin{scriptsize}
  \begin{tabular}{ >{\centering}m{1.8cm}  cc cc cc cc}
      & \multicolumn{8}{c}{Estimated model} \\
     Simulated model & \multicolumn{2}{c}{spGARCH} & \multicolumn{2}{c}{H-spGARCH} & \multicolumn{2}{c}{Log-spGARCH} & \multicolumn{2}{c}{E-spGARCH} \\
     \hline
     spGARCH       & \textbf{-455.722} & \textbf{(93.9\%)}  &         -463.629  &         (3.1\%)   &         -464.023  &         (2.5\%)    &         -474.053   &         (0.5\%)    \\
     H-spGARCH     &         -472.452  &         (0.0\%)    & \textbf{-451.418} & \textbf{(93.7\%)} &         -455.530  &         (6.3\%)    &         -535.889   &         (0.0\%)    \\
     Log-spGARCH   &         -378.353  &         (0.0\%)    &         -362.973  &         (0.3\%)   & \textbf{-352.104} & \textbf{(99.7\%)}  &         -393.634   &         (0.0\%)    \\
     E-spGARCH     &         -461.607  &         (3.3\%)    &         -461.850  &         (2.2\%)   &         -462.029  &         (2.1\%)    & \textbf{-458.491}  & \textbf{(92.4\%)}  \\
     \hline
   \end{tabular}
   \end{scriptsize}
\end{table}

For the correct models (i.e., all combinations on the diagonal of Table \ref{table:selection}), we have further analyzed the performance of the estimators. The estimated coefficients are depicted by a series of boxplots in Figure \ref{fig:estimation} for both estimation strategies. The true values $\rho = 0.5$, $\lambda = 0.4$, and $\alpha = 1$ are shown using asterisks. Due to the similarity of the H-spGARCH and the log-spGARCH, it is not surprising that their estimated values are very similar. Hence, these two fitted models would generate values that are very close to the true values, even if the wrong model is selected. The estimators are also unbiased for the spGARCH and E-spGARCH, although slightly less precise. Generally, we see that both estimators, the NLSE and MLE, perform equally well.

\section{Real-World Application: Condominium Prices in Berlin}

In markets that are constrained in space, one can typically expect to find locally varying risks. Typical examples of such markets are real estate and labor. For the former, the property prices are highly dependent upon the location of the real estate and prices in the surrounding areas. Similarly, for the latter, this market is also often constrained in space due to the limited mobility of laborers. On the one hand, we observe conditional mean levels that vary in space, so-called spatial clusters. On the other hand, we may also expect to find locally varying risks relating to price, which can be considered as local volatility clusters. The proposed spatial GARCH-type models are capable of capturing such spatial dependencies in the conditional variance. This motivates why we consider condominium prices at a fine spatial scale. In particular, we will analyze the relative changes in Box-Cox transformed prices from 1995--2014 across all Berlin ZIP-Code regions (i.e., $n = 190$). The data are depicted on the left in Figure \ref{fig:example1}. The sample mean for these price changes is 0.8103 with a median of 0.6965. In total, the price changes range from -2.5650 to 7.3131. We can observe a spatial cluster of positive values in the north-western ZIP-Code regions. In addition, we depict the row-standardized contiguity matrix $\xmat{W}_1^* = \xmat{W}_2^* = \xmat{W}^*$ on the right in Figure \ref{fig:example1}.
\begin{figure}
  \centering
  \includegraphics[width=0.4\textwidth]{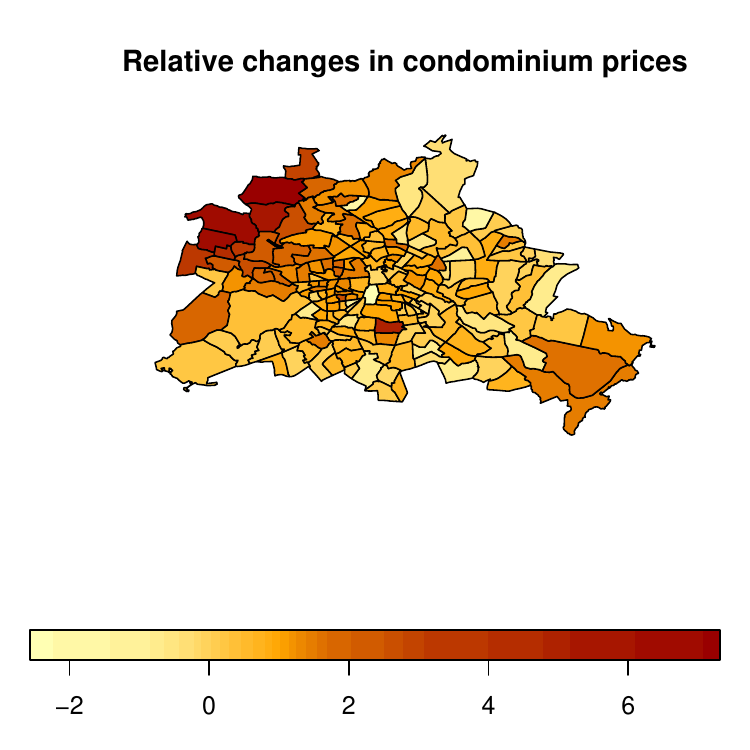}
  \includegraphics[width=0.4\textwidth]{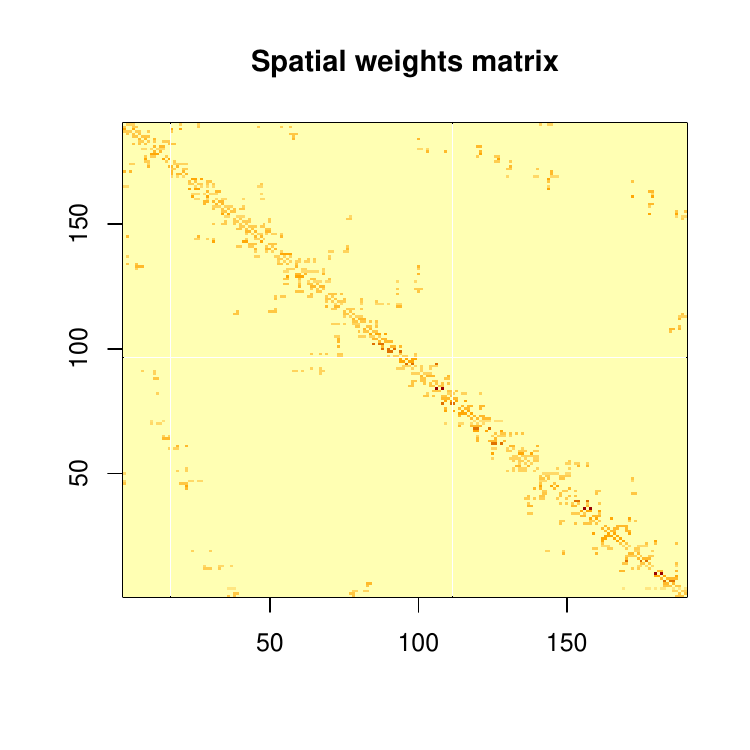}
  \caption{Relative changes (in per cent) of the Box-Cox transformed prices for all Berlin ZIP-Code regions, along with the applied spatial weight matrix, which coincides with a classical row-standardized first-order contiguity matrix.}\label{fig:example1}
\end{figure}

To account for dependence in the local means, we first estimate the coefficients of a classical spatial autoregressive model, briefly SAR (see, e.g. \citealt{Halleck15,Lee04}), and then fitting the spGARCH, E-spGARCH, and log-spGARCH models to the residuals of this autoregressive model. Thus, the GARCH parameters can be interpreted as local model uncertainties of the linear spatial autoregressive model. Moreover, these GARCH effects cover latent variables, because natural candidates of regressors -- such as the average income per capita, the job market situation, property taxes, public transport connections and cultural offerings -- cannot be included due to the fine spatial scale of ZIP-Code levels. There is no reliable way to associate quantities like personal orhousehold income with ZIP-Code areas. Moreover, public services are very similar in Berlin across all areas, such that the differences in prices of real estate are rather more likely to be associated with lifestyle factors such as whether or not an area is `trending'. All of these latent variables can be captured by spatial GARCH effects.  Thus, we estimated a purely autoregressive model without any regressors. Moreover, we assume $t$-distributed residuals for the ML approach, since the changes in the house prices and the residuals of the SAR model follow a heavy-tailed distribution, as it is also often observed for financial data. The distribution $f_\varepsilon$ is chosen from a set of heavy-tailed distributions by minimizing the BIC. More precisely, we consider $t$-distributions with between one (Cauchy distribution) and nine degrees of freedom, and we consider the standard normal distribution as potential error distributions. As in the MC studies, we have chosen the additional parameters of the Log-spGARCH and E-spGARCH to be constant, namely $\Theta = 1$ and $b = 2$. In Table \ref{table:results_ex1}, we summarize the fitted parameters of the best-fitting models. Since, we have chosen $\xmat{W}_1^* = \xmat{W}_2^* = \xmat{W}^*$, the fitted log-spGARCH model coincides with the hybrid model. According to the BIC, the E-spGARCH model fits the residual process the best. While the spGARCH and Log-spGARCH attain their minimal BIC for a $t$-distribution with 3 degrees of freedom, we get 7 degrees of freedom for the E-spGARCH. Thus, the E-spGARCH tends to model heavier tails than the other two models. Obviously, positive spatial dependence exists in the conditional heteroscedasticity of the residuals because all parameters are positive. This implies that clusters of higher model uncertainties and, therefore, larger prediction intervals, do exist. Moreover, these uncertainties spill over to the neighboring ZIP-Code regions. More precisely, we would observe multiplicative dynamics of the conditional variances for the E-spGARCH attempt. Because we would not expect asymmetry effects, $\zeta$ has been set to zero. Eventually, we report a robustified version of Moran's $I$ based on percentage bends (see \citealt{Wilcox94}) as a measure for the spatial dependence of the residuals and squared residuals. All of them do not significantly differ from zero.

\begin{table}
    \caption{Estimated parameters of the spGARCH model for the residuals of a spatial autoregressive model, where the dependent variables are the changes in the condominium prices in Berlin. All standard errors are given in parentheses.}\label{table:results_ex1}
        \centering
        {\scriptsize
    \begin{tabular}{l cc cc cc}
                                                & \multicolumn{2}{c}{spGARCH model}  & \multicolumn{2}{c}{Log-spGARCH model}    & \multicolumn{2}{c}{E-spGARCH model}    \\
      Parameter                                 & \multicolumn{2}{c}{Estimate}       & \multicolumn{2}{c}{Estimate}           & \multicolumn{2}{c}{Estimate}         \\
      \hline
      \emph{Mean equation}                      &          &                         &          &                             &          &                           \\
      $\quad$ $\mu$                             &                          \multicolumn{6}{c}{0.2002 (0.0813)}                                                       \\
      $\quad$ $\gamma$                          &                          \multicolumn{6}{c}{0.7456 (0.0588)}                                                       \\
      \emph{Residuals process}                  &          &                         &          &                             &          &                           \\
      $\quad$ $\alpha$                          & 0.1757   & (0.2181)                & 0.0000   & (0.2157)                    & 2.8703   & (0.7852)                  \\
      $\quad$ $\rho$                            & 0.0609   & (0.0579)                & 0.0507   & (0.1791)                    & 1.5339   & (0.7521)                  \\
      $\quad$ $\lambda$                         & 0.2093   & (0.8686)                & 0.9646   & (0.0784)                    & 3.0380   & (1.2054)                  \\
      \emph{Summary statistics}                 &          &                         &          &                             &          &                           \\
      $\quad$ Degrees of freedom $f_\varepsilon$ & \multicolumn{2}{c}{3}             & \multicolumn{2}{c}{3}                  & \multicolumn{2}{c}{7}                \\
      $\quad$ LL                                & \multicolumn{2}{c}{-217.1787}      & \multicolumn{2}{c}{-218.6777}          & \multicolumn{2}{c}{-185.6466}        \\
      $\quad$ BIC                               & \multicolumn{2}{c}{450.0985}       & \multicolumn{2}{c}{453.0965}           & \multicolumn{2}{c}{387.0343}         \\
      $\quad$ $I_r$ res. ($p$-value)              &  -0.0725  & (0.3201)                &  -0.0724  & (0.3211)                    & -0.0706   & (0.3325)          \\
      $\quad$ $I_r$ squ. res. ($p$-value)         &  -0.0042  & (0.9539)                &   0.0072  & (0.8204)                    &  0.0605   & (0.9210)          \\
    \end{tabular}
    }
\end{table}

\section{Discussion and Conclusions}\label{sec:conclusion}

Recently, a few papers have introduced spatial ARCH and GARCH-type models that allow the modelling of an instantaneous spatial autoregressive dependence of heteroscedasticity. In this paper, we propose a unified spatial GARCH model that covers all previous approaches. Alongside these spatial ARCH-type processes, we introduce a completely novel spGARCH and E-spGARCH model. Due to the flexible definition of the model as a set of functions, we can derive a common estimation and a model selection strategy for all these spatial GARCH-type models, that is based on nonlinear least squares and a maximum-likelihood approach.

In addition, we want to stress that the dependence structure does not necessarily have to be interpreted in a spatial sense. Thus, we briefly discuss a further example below, on which the ``spatial'' proximity could also be defined as the edges of networks. In such cases,  $\xmat{W}_1$ and $\xmat{W}_2$ would be interpreted as adjacency matrices. For instance, one might consider the financial returns of several stocks as a network, where the only assets that are connected are those that are correlated  above a certain threshold. This could create a financial network, as shown in Figure \ref{fig:example2}. Thus, spGARCH models can be used to analyze various forms of information, whether that might be volatility, risk, or spillovers from one stock to another, if these assets are close to one another within a certain network. In future research, attempts for modelling volatility clusters within networks, using spatial GARCH models, should be analyzed in greater detail.
\begin{figure}
  \centering
  \includegraphics[width=0.4\textwidth]{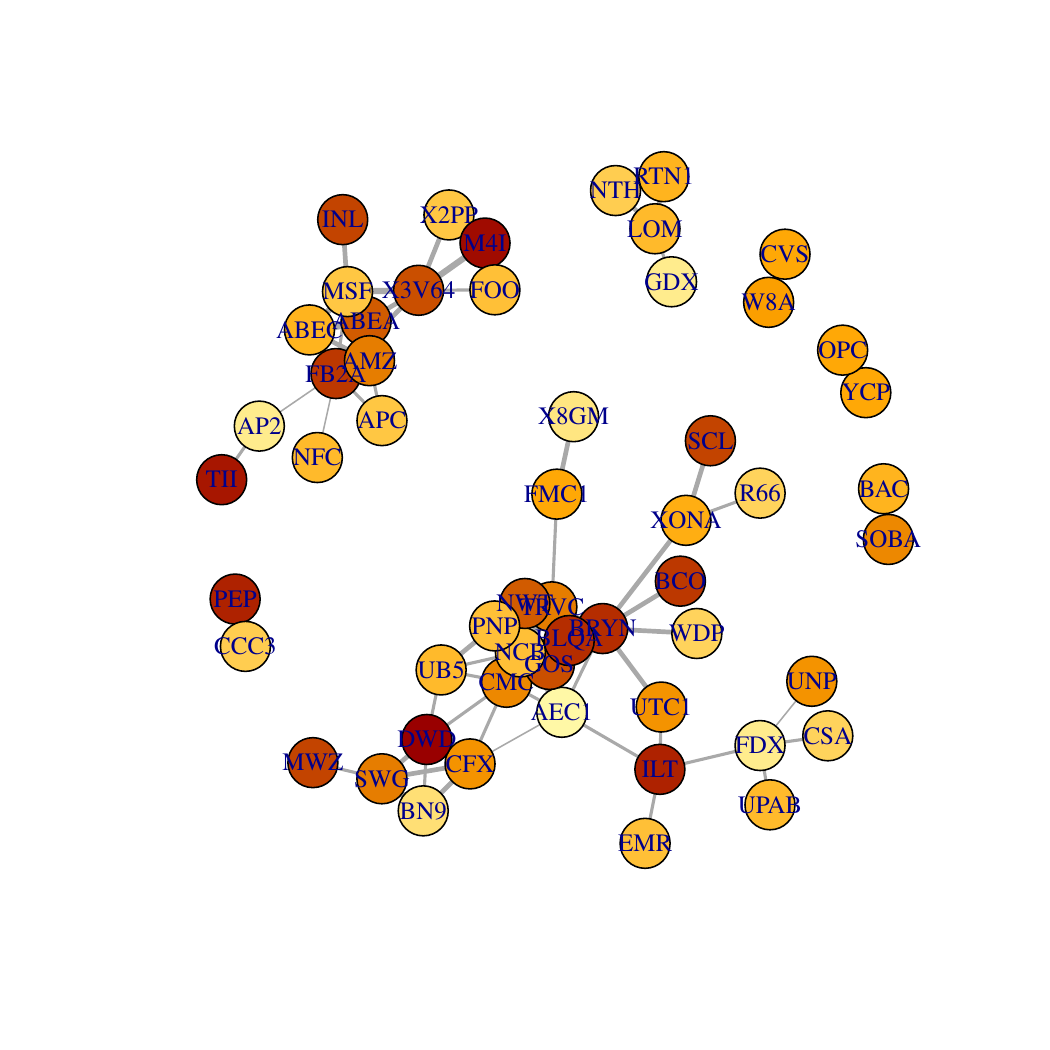}
  \caption{Financial network of selected stocks of the S\&P 500, where the color of the nodes denotes the annual returns in 2017 with darker colors indicating higher returns.}\label{fig:example2}
\end{figure}

In the second part of this paper, we evaluated the performance of both the parameter estimation and model selection procedures, using a Monte Carlo simulation study. As this unified approach is already available in the \textbf{R}-package \texttt{spGARCH}, we briefly sketched its computational implementation. Eventually, the use of the model was demonstrated through an empirical example. More precisely, this paper has shown how the model uncertainties of local price changes in the real estate market in Berlin can be described using an E-spGARCH model. Moreover, the residuals of fitted spatial GARCH-type models seem to follow a heavy-tailed distribution. Though all proposed models are uncorrelated and have a zero mean, potential interactions between the error process and the mean equation should be analyzed in greater detail in future research.

In this paper, we have assumed that suitable functions of the unified model framework are known. Hence, it is possible to maximize certain goodness-of-fit criteria in order to obtain the best-fitting model. However, these functions can also be estimated using a nonparametric approach; for instance by penalized or classical B-splines. This approach will be the subject of a forthcoming paper.

\section{Appendix}
\section*{Proofs and Further Results on the Stochastic Properties of the Unified Approach}

Initially, we provide the proofs of the Theorems 1 - 3 and the Corollaries 1 and 2.

\begin{proof}[Theorem \ref{solution}]
Inserting \eqref{eq:initial} into \eqref{eq:unified}, we get that

\[ \xvec{F}(\xvec{h}) = \xvec{\alpha} + \xmat{W}_1 \xvec{\gamma}(\xmat{E} \xvec{h}) - (\xmat{I} - \xmat{W}_2) \xvec{F}(\xvec{h}) = \xvec{0} . \]

Now, we want to know whether there is a solution $\xvec{h}$ of this equation. This is equivalent to the problem of whether or not $\xvec{T}(\xvec{h})$ has a
fixed-point. The existence and the uniqueness of a solution $\xvec{h}$ is an immediate consequence of the Banach fixed-point theorem. Inserting this solution in \eqref{eq:initial} provides a solution $\xvec{Y}$.
\end{proof}

\begin{proof}[Corollary \ref{cor:stationarity2}]
Because $(Y(\xvec{s}_1), \ldots, Y(\xvec{s}_n))^\prime$ is a measurable function of \linebreak $(\varepsilon(\xvec{s}_1), \ldots, \varepsilon(\xvec{s}_n))^\prime$, it is strictly stationary as well.
\end{proof}

\begin{proof}[Theorem \ref{th:moments1}]
\begin{itemize}
\item[a)] Since
\[ \xvec{F} = ( f(\xvec{h}(\xvec{s}_i)) )_{i=1,...,n} = \xvec{\alpha} + \xmat{W}_1 \xvec{\gamma}(\xmat{E} \xvec{h}) + \xmat{W}_2 \xvec{F} \]
its solution $\xvec{h}$ is a function of $\varepsilon(\xvec{s}_1)^2,..., \varepsilon(\xvec{s}_n)^2$, say $h(\xvec{s}_i) = \xi_i(\varepsilon(\xvec{s}_1)^2,..., \varepsilon(\xvec{s}_n)^2)$. Since $Y(\xvec{s}_i) = \varepsilon(\xvec{s}_i) \sqrt{h(\xvec{s}_i)}$ it follows that
\begin{eqnarray*}
-Y(\xvec{s}_i) & = & -\varepsilon(\xvec{s}_i) \sqrt{\xi_i(\varepsilon(\xvec{s}_1)^2,..., \varepsilon(\xvec{s}_n)^2)} \\
& = & -\varepsilon(\xvec{s}_i) \sqrt{\xi_i(\varepsilon(\xvec{s}_1)^2,..., (-\varepsilon(\xvec{s}_i))^2,...,\varepsilon(\xvec{s}_n)^2)} \\
& \stackrel{d}{=} & Y(\xvec{s}_i)
\end{eqnarray*}
since $\xvec{\varepsilon}$ is sign-symmetric. Thus $Y(\xvec{s}_i)$ is a symmetric random variable.

Moreover,
\[ (Y(\xvec{s}_1),\ldots,Y(\xvec{s}_n))^\prime \stackrel{d}{=} (-Y(\xvec{s}_1),\ldots,Y(\xvec{s}_n))^\prime. \]
Thus, $E(Y(\xvec{s}_1)^{2k-1} | Y(\xvec{s}_2),\ldots,Y(\xvec{s}_n)) = E(-Y(\xvec{s}_1)^{2k-1} | Y(\xvec{s}_2),\ldots,Y(\xvec{s}_n))$. Consequently, this quantity is zero.

\item[b)] Now
\begin{eqnarray*}
Y(\xvec{s}_i) Y(\xvec{s}_j) & = & \varepsilon(\xvec{s}_i) \varepsilon(\xvec{s}_j) \sqrt{\xi_i(\varepsilon(\xvec{s}_1)^2,..., \varepsilon(\xvec{s}_n)^2)}
\sqrt{\xi_j(\varepsilon(\xvec{s}_1)^2,..., \varepsilon(\xvec{s}_n)^2)}\\
& \stackrel{d}{=} & - \varepsilon(\xvec{s}_i) \varepsilon(\xvec{s}_j) \sqrt{\xi_i(\varepsilon(\xvec{s}_1)^2,..., \varepsilon(\xvec{s}_n)^2)}
\sqrt{\xi_j(\varepsilon(\xvec{s}_1)^2,..., \varepsilon(\xvec{s}_n)^2)}\\
& = & - Y(\xvec{s}_i) Y(\xvec{s}_j)
\end{eqnarray*}
and thus $Cov(Y(\xvec{s}_i), Y(\xvec{s}_j)) = 0$ for $i \neq j$.

\end{itemize}
\end{proof}

\begin{proof}[Theorem \ref{th:existence}]
The result follows with straightforward calculations.
\end{proof}
\begin{proof}[Corollary \ref{cor:stationarity}]
Follows with the same argumentation as Corollary \ref{cor:stationarity2}.
\end{proof}

\begin{proof}[Theorem \ref{th:moments}]

\begin{itemize}
\item[a)] Because $2|ab| \le a^2 + b^2$, it follows that
\[ E( \left| f^{-1}( c_i + \xvec{d}_i g(\xvec{\varepsilon})) \right|^k \; |\varepsilon(\xvec{s}_i)|^k ) \le
E\left( \left( f^{-1}(c_i + \xvec{d}_i g(\xvec{\varepsilon})) \right)^{2k} \right) + E( \varepsilon(\xvec{s}_i)^{2k} ) . \]
Now, if $f^{-1}$ is convex, it implies that $(f^{-1}(x))^{2k}$ is convex as well. Using the inequality of Jensen, we obtain
\[ E\left( \left( f^{-1}\left(\frac{2c_i}{2} + \frac{2 \xvec{d}_i^\prime g(\xvec{\varepsilon})}{2} \right) \right)^{2k} \right) \le \frac{ \left( f^{-1}(2c_i) \right)^{2k}}{2} + \frac{E\left( \left( f^{-1}(2 \xvec{d}_i^\prime g(\xvec{\varepsilon}) ) \right)^{2k} \right)}{2}. \]
Thus, the result follows.
\item[b)] The first part follows immediately because
\[ Y(\xvec{s}_i) = \sqrt{f^{-1}(c_i + \xvec{d}_i^\prime g(\xvec{\varepsilon}))} \; \varepsilon(\xvec{s}_i) \stackrel{d}{=} -  \sqrt{f^{-1}(c_i + \xvec{d}_i^\prime g(-\xvec{\varepsilon})) }  \; \varepsilon(\xvec{s}_i) = - Y(\xvec{s}_i) . \]
The second part follows as in the proof of Theorem \ref{th:moments1}.
\end{itemize}
\end{proof}

\section*{Partial Derivatives}
\subsection*{SpGARCH}
Using \eqref{epsGARCH}, one can derive $\frac{\partial h(\xvec{s}_i)}{\partial \varepsilon(\xvec{s}_j)}$ for the spGARCH model. To be precise,  we obtain
\[  \frac{\partial h(\xvec{s}_i)}{\partial \varepsilon(\xvec{s}_j)} = \frac{ \partial ( \xmat{I} - \xmat{W}_1 \mbox{diag}(\xvec{\varepsilon}^{(2)}) - \xmat{W}_2 )^{-1} \xvec{\alpha} }{ \partial \varepsilon(\xvec{s}_j) } . \]
From \citet[][8.15]{Harville97}, it follows that
\[ \frac{ \partial ( \xmat{I} - \xmat{W}_1 \mbox{diag}(\xvec{\varepsilon}^{(2)}) - \xmat{W}_2 )^{-1} \xvec{\alpha} }{ \partial \varepsilon(\xvec{s}_j) }
= 2  \varepsilon(\xvec{s}_j) ( \xmat{I} - \xmat{W}_1 \mbox{diag}(\xvec{\varepsilon}^{(2)}) - \xmat{W}_2 )^{-1}  \;\]
\[ \hspace*{4cm} \cdot \;  (\xvec{0},\ldots,\xvec{0}, \xvec{w}_{1,j}, \xvec{0},\ldots,\xvec{0} ) \;
( \xmat{I} - \xmat{W}_1 \mbox{diag}(\xvec{\varepsilon}^{(2)}) - \xmat{W}_2)^{-1} \xvec{\alpha},   \]
where $\xvec{w}_{1,j}$ denotes the $j$-th column of $\xmat{W}_1$.

\subsection*{Hybrid Spatial GARCH}
Similarly, we can derive the derivative for the H-spGARCH model using \eqref{epsHARCH}, that is,
\[ \frac{\partial h(\xvec{s}_i)}{\partial \varepsilon(\xvec{s}_j)} = \frac{\partial \log(h(\xvec{s}_i))}{\partial \varepsilon(\xvec{s}_j)} h(\xvec{s}_i), \]
where $\frac{\partial \log(h(\xvec{s}_i))}{\partial \varepsilon(\xvec{s}_j)} = 2 d_{ij}/\varepsilon(\xvec{s}_j)$ and $(\xmat{I} - \xmat{W}_1 - \xmat{W}_2)^{-1} \xmat{W}_1 =
(d_{ij})$.

\subsection*{Exponential and Logarithmic Spatial GARCH}
Finally, for the spatial E-spGARCH, it holds using \eqref{spEGARCH3} that
\[ \xvec{Z} = (\xmat{I} - \xmat{W}_2)^{-1} (\xmat{W}_1 \xvec{G} + \xvec{\alpha}), \]
where $\xvec{G} = ( g(\varepsilon(\xvec{s}_1),\ldots, g(\varepsilon(\xvec{s}_n)))^\prime$ and
\[ \frac{\partial h(\xvec{s}_i)}{\partial \varepsilon(\xvec{s}_j)} = \frac{\partial \log(h(\xvec{s}_i))}{\partial \varepsilon(\xvec{s}_j)} h(\xvec{s}_i) = c_{ij} g^\prime(\varepsilon(\xvec{s}_j)) h(\xvec{s}_i), \]
where $(\xmat{I} - \xmat{W}_2)^{-1} \xmat{W}_1 = (c_{ij})$.


\end{document}